\let\cite\citep
  \tikzstyle{commgraphnode}=[circle,minimum size=20pt]
  \tikzstyle{commgraphedge}=[]
  \tikzstyle{commgraphhyperarc}=[ellipse,fill=gray,opacity=.3,inner sep=0pt]
\undefined\pgfrealjobname{\jobname}
    \else\pgfrealjobname{\regeneratepgf}
\title{Common Knowledge in Interaction Structures}
\author{Krzysztof R. Apt and Andreas Witzel and Jonathan A. Zvesper} %for use in pdfinfo
\makeatletter\pdfinfo{
  /Title (\@title)
  /Author (\@author)
%  /CreationDate (D:YYYYMMDDHHmmss) % default: pdf file creation date
%  /ModDate (D:YYYYMMDDHHmmss)
}\makeatother\fi
\newcommand{\crefformats}[7]{}
\newcommand{\crefformats}[7]{%
% 1:open 2:close 3:envName 4:lowerRef 5:upperRef 6:lowerPluralRef 7: upperPluralRef
  \crefformat{#3}{##2\ifthenelse{\equal{#4}{}}{}{#4~}#1##1#2##3}
  \Crefformat{#3}{##2\ifthenelse{\equal{#5}{}}{}{#5~}#1##1#2##3}
  \crefrangeformat{#3}{\ifthenelse{\equal{#6}{}}{}{#6~}##3#1##1#2##4--##5#1##2#2##6}
  \Crefrangeformat{#3}{\ifthenelse{\equal{#7}{}}{}{#7~}##3#1##1#2##4--##5#1##2#2##6}
  \crefmultiformat{#3}{\ifthenelse{\equal{#6}{}}{}{#6~}##2#1##1#2##3}{ and~##2#1##1#2##3}{, ##2#1##1#2##3}{ and~##2#1##1#2##3}
  \Crefmultiformat{#3}{\ifthenelse{\equal{#7}{}}{}{#7~}##2#1##1#2##3}{ and~##2#1##1#2##3}{, ##2#1##1#2##3}{ and~##2#1##1#2##3}
  \crefrangemultiformat{#3}{\ifthenelse{\equal{#6}{}}{}{#6~}##2#1##1#2##3}{ and~##2#1##1#2##3}{, ##2#1##1#2##3}{ and~##2#1##1#2##3}
  \Crefrangemultiformat{#3}{\ifthenelse{\equal{#7}{}}{}{#7~}##2#1##1#2##3}{ and~##2#1##1#2##3}{, ##2#1##1#2##3}{ and~##2#1##1#2##3}
}
\newcommand{\marginlabel}[2]{%
  \mbox{}%
  \marginpar[\raggedleft\hspace{0pt}#1]{\raggedright\hspace{0pt}#2}%
}
\tikzstyle\undefined
\newcommand{\todoar}[2][]{\todo[#1]{#2}}
\else\tikzstyle{todoarrow}=[opacity=0.4,gray,-stealth]
\newcommand{\todoar}[2][]{%
  \marginlabel{\small #2}%\tikz[remember picture,overlay]\node(todoarrowstart){};}%
        {\tikz[remember picture,overlay,baseline=(todoarrowstart.220)]\node(todoarrowstart){};$\lhd$ \small #2}%
  \ifthenelse{\equal{#1}{}}{}{{\color{red}[}#1{\color{red}]}}%
  \tikz[remember picture,overlay]\node[inner sep=2pt](todoarrowend){};%
  \tikz[remember picture,overlay]\path(todoarrowstart)edge[todoarrow,out=190,in=-45](todoarrowend);%
}
\newcommand{\todo}[2][]{%
  \marginlabel{{\small #2} $\rhd$}{$\lhd$ \small #2}%
  \ifx\color\undefined%
  \ifthenelse{\equal{#1}{}}{}{{[}#1{]}}%
  \else%
  \ifthenelse{\equal{#1}{}}{}{{\color{red}[}#1{\color{red}]}}%
  \fi%
}
\newcounter{autoexternalpgf}
\newcommand{\abs}[1]{\lvert#1\rvert}
\newenvironment{mainclaim}{\begin{center}}{\end{center}}
  \newcommand{\newtheoremwithalias}[3]{%
    \ifx\newaliascnt\undefined
    \newcounter{#1}
    \else
    \newaliascnt{#1}{#2}
    \fi
    \newtheorem{#1}[#1]{#3}
    \ifx\aliascntresetthe\undefined\else
    \aliascntresetthe{#1}
    \fi
  }
  \renewcommand{\creflastconjunction}{ and }
    \if@envcntsame\errmessage{cleveref naming doesn't work because no aliascntrs used in llncs.cls}
    \newtheorem{observation}{Observation}[section]
    \newtheorem{fact}{Fact}[section]
    \newtheorem{fact}{Fact}
    \newtheorem{theorem}{Theorem}[section]
\theoremstyle\undefined\else\theoremstyle{remark}\fi
\theoremstyle\undefined\else\theoremstyle{definition}\fi
\newcommand{\oldbfe}[1]{\begin{bfseries}\emph{#1}\end{bfseries}}
\newcommand{\ES}{\mbox{$\emptyset$}}
\newcommand{\La}{\mbox{$\:\Leftarrow\:$}}
\newcommand{\Ra}{\mbox{$\:\Rightarrow\:$}}
\newcommand{\sse}{\mbox{$\:\subseteq\:$}}
\newcommand{\LL}{\mbox{$\ldots$}}
\newcommand{\C}[1]{\mbox{$\{{#1}\}$}}           % curly braces
\newcommand{\NI}{\noindent}
\newcommand{\II}{\vspace{2 mm}}
\newcommand{\szkew}[1]{\relax \setbox0=\hbox{\kern -24pt $\displaystyle#1$\kern 0pt }%
%\advance\ht0 by 0pt %
%\advance\dp0 by -10pt %
\box0}
{\catcode`\@=11 \global\let\ifjusthvtest@=\iffalse}
\newcounter{oldmycaption}
\title{Common Knowledge in Interaction Structures\footnote{To appear in Proceedings of TARK 2009}}
\author{Krzysztof R. Apt
\and
Andreas Witzel
\and
Jonathan A. Zvesper
\\[2ex]
ILLC, Universiteit van Amsterdam, Netherlands \\
and CWI, Amsterdam, Netherlands
}
\newcommand{\oldendexample}{}
\let\oldendexample\endexample
\renewcommand{\endexample}{\qed\oldendexample}
\newcommand{\state}[1][]{\ensuremath{(V#1,M#1)}\xspace}
\newcommand{\dfn}[1]{\emph{\bfseries #1}}
\newcommand{\setof}[1]{\ensuremath{\mathit{Set}(#1)}\xspace}
\newcommand{\msg}[3]{\ensuremath{(#1,#2,#3)}\xspace}
\newcommand{\bits}{\ensuremath{At}\xspace}
\newcommand{\knows}[1]{\ensuremath{K_{#1}}\xspace}
\newcommand{\ck}[1]{\ensuremath{C_{#1}}\xspace}%{\ensuremath{C\kern-1.5ptK_{#1}}}
\newcommand{\Facts}{\mathit{Facts}}
\newcommand{\powerset}[1]{\ensuremath{\mathcal{P}(#1)}}
\renewcommand{\iff}{\text{ iff }}
\renewcommand{\enspace}{}
\begin{document}
%\Opensolutionfile{movedProofs}

\maketitle
\sloppy

\begin{abstract}
  We consider two simple variants of a framework
  for reasoning about knowledge amongst communicating groups of players.
  Our goal is to clarify the resulting epistemic issues.
  In particular, we investigate what is the impact of common knowledge of the
  underlying hypergraph connecting the players, and under what conditions
  common knowledge distributes over disjunction.
  We also obtain two versions of the classic result that common knowledge cannot be
  achieved in the absence of a simultaneous event
  (here a message sent to the whole group).
\end{abstract}

\section{Introduction}
\label{sec:intro}

We introduce a framework for reasoning about communication amongst groups of players.
We assume that each player is a member of a certain number of groups, and that he is able to broadcast
synchronously information to each of those groups.
Thus there is what we call an \dfn{interaction structure}, a hypergraph of the players, that determines the
communication protocol.
We are interested in studying what players can learn in certain restricted communication settings,
what impact common knowledge of the underlying hypergraph can have,
and in properties of the resulting knowledge that can simplify reasoning about it.

\begin{figure}
  \centering
%  \hfill
  \subfloat[]{\label{fig:interaction-structure-examples:a}
    \beginpgfgraphicnamed{graph_kite1}
    \begin{tikzpicture}[commgraphnode,rounded corners=5mm,scale=.7,transform shape]
      \node[draw] (m) at (1,0)  {$n$};
      \node[draw] (l) at (2,1)  {$l$};
      \node[draw] (k) at (2,-1) {$k$};
      \node[draw] (j) at (4,0)  {$j$};
      \node[draw] (i) at (5.5,0)  {$i$};
      % \node[commgraphhyperarc,fit=(m)(l)(k)] {};

      \begin{pgfonlayer}{background}
      \fill[commgraphhyperarc,overlay] ($(m.west)+(-18pt,0)$) -- ($(k.south east)+(6pt,-18pt)$) -- ($(l.north east)+(6pt,18pt)$) -- cycle;
      \path
        node at (1.9,.5) (ljl) {}
        node at (4.1,.5) (ljr) {}
        (l) -- node[sloped,commgraphhyperarc,fit=(ljl)(ljr)] {} (j);
      \path
        node at (1.9,-.5) (kjl) {}
        node at (4.1,-.5) (kjr) {}
        (k) -- node[sloped,commgraphhyperarc,fit=(kjl)(kjr)] {} (j);
      \node[commgraphhyperarc,fit=(j)(i)] {};
      \end{pgfonlayer}
    \end{tikzpicture}
    \endpgfgraphicnamed
  }
%  \hfill
  \subfloat[]{\label{fig:interaction-structure-examples:b}
    \beginpgfgraphicnamed{graph_kite2}
    \begin{tikzpicture}[commgraphnode,scale=.7,transform shape]
      \node[draw] (m) at (1,0)  {$n$};
      \node[draw] (l) at (2,1)  {$l$};
      \node[draw] (k) at (2,-1) {$k$};
      \node[draw] (j) at (3,0)  {$j$};
      \node[draw] (i) at (4.5,0)  {$i$};

      \begin{pgfonlayer}{background}
      \path
        node at (.9,.5) (mll) {}
        node at (2.1,.5) (mlr) {}
        (m) -- node[sloped,commgraphhyperarc,fit=(mll)(mlr)] {} (l);
      \path
        node at (.9,-.5) (mkl) {}
        node at (2.1,-.5) (mkr) {}
        (m) -- node[sloped,commgraphhyperarc,fit=(mkl)(mkr)] {} (k);
      \path
        node at (1.9,.5) (ljl) {}
        node at (3.1,.5) (ljr) {}
        (l) -- node[sloped,commgraphhyperarc,fit=(ljl)(ljr)] {} (j);
      \path
        node at (1.9,-.5) (kjl) {}
        node at (3.1,-.5) (kjr) {}
        (k) -- node[sloped,commgraphhyperarc,fit=(kjl)(kjr)] {} (j);
      \node[commgraphhyperarc,fit=(j)(i)] {};
      \end{pgfonlayer}
    \end{tikzpicture}
    \endpgfgraphicnamed
  }
%  \hfill~
  \caption{Two interaction structures. Hyperarcs are shown in gray.}
  \label{fig:interaction-structure-examples}
\end{figure}
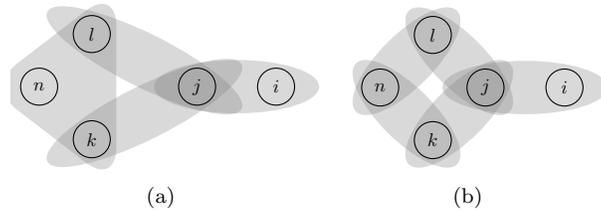

For example, consider \cref{fig:interaction-structure-examples}.
If player~$i$ knows that he is in interaction structure~\subref{fig:interaction-structure-examples:a},
and he learns a fact from player~$j$ that initially only player~$n$ knew,
then $i$~can deduce that both~$l$ and~$k$ also must have learned that fact.
In interaction structure~\subref{fig:interaction-structure-examples:b},
he can only deduce that either of them has learned it, but not which one.
If~$i$ does not know the interaction structure,
he cannot draw such conclusions, since player~$n$ might as well have communicated with player~$j$ directly.
One particular focus of our discussion concerns conditions under which knowledge of a disjunction
does allow us to deduce knowledge of one particular disjunct,
thus simplifying reasoning in such situations.
Another focus is to analyze the conditions for attaining common knowledge.

In the following \cref{sec:defs}, we first set up a more restricted framework
where players can only communicate those facts that they initially know,
and we examine this framework in detail in \cref{sec:telling}.
In \cref{sec:forwarding} we then lift this restriction
and examine how the properties of knowledge are affected
when players are allowed to send information which they learned from other players.
In \cref{sec:conclusions} we discuss related work,
in particular two closely related frameworks from the literature,
and draw some conclusions.
We look at some possible extensions in \cref{sec:extensions}.

\section{Preliminaries}
\label{sec:defs}

We assume the following setup to be common knowledge among the players.
There is a set of players $N$.  Each player $i \in N$ has
a private set $\bits_i$ of facts (atomic propositions), of which only player $i$ initially knows whether they are true.
The truth values of these facts are represented by a \dfn{valuation},
which can be written as a set $V \subseteq \bits$ containing those facts that are true,
where $\bits=\bigcup_{i\in N}\bits_i$.
By $V_i$, we denote $V\cap \bits_i$, the restriction of $V$ to $i$'s facts. 

% restricted valuations moved to extensions section

Throughout, we assume communication to be \dfn{truthful} in the sense that it only contains information the sender knows to be true.

%\todo{why tuple?}
An \dfn{interaction structure} for players $N$ is a tuple $(H,
(\bits_i)_{i \in N})$, where $H$ is a \dfn{hypergraph on $N$}, i.e.,  a
set of non-empty subsets of $N$, called \dfn{hyperarcs}, and the $At_i$ are pairwise disjoint sets.

In the present section we place two restrictions, that are related.
Firstly, we use unordered \emph{sets} of messages,
i.e.~without any temporal structure, since it only matters
whether a given message has been broadcast or not,
and not \emph{when} it was broadcast.  Secondly, we only allow
messages of the form $\msg{i}{A}{p}$ with $i\in A\in H$ and $p\in\bits_i$.
That is, players only broadcast basic facts that `belong' to them.
In \cref{sec:forwarding} we partially lift these restrictions,
allowing more general forms of broadcast.  This in turn
means introducing some temporal ordering since if the
message $\msg{i}{A}{p}$ occurs, with $p\not\in\bits_i$, then everybody in
$A$ knows that \emph{before} that broadcast there was another
broadcast of the form $\msg{\cdot}{B}{p}$ with $i \in B$,
since otherwise~$i$ could not have known~$p$.

Given these restrictions, we consider two different situations:
one in which the underlying hypergraph is commonly known amongst the players;
and one in which it is not,
in the sense that a player knows only the hyperarcs to which he belongs.

In each case an interaction structure defines a
communication protocol: each player~$i$ can at any point broadcast
any true fact $p \in \bits_i$ to any hyperarc $A \in H$ with $i \in A$.
Thus a \dfn{message} is a tuple $\msg{i}{A}{p}$
with $i \in A$ and $p\in\bits_i$; $\msg{i}{A}{p}$ is the message in which $i$
communicates among the group $A$ his fact~$p$.
\dfn{$H$-compliant messages} are those in which $A \in H$.
If the players consider only $H$-compliant messages possible,
then they know the underlying hypergraph $H$.
So if the model allows only $H$-compliant messages, the
underlying hypergraph $H$ is common knowledge among the players;
if it uses all messages, $H$ is unknown.

We next define our model formally in order to reason about the
knowledge of the players and how it changes as messages are broadcast.
This is roughly along the lines of \emph{history based models}
(see, e.g., \citet{pacuit_reasoning_2007,FHMV_RAK}).
We start by defining a \dfn{state}, which we might also have called `possible world', \state
to consist of a valuation~$V \subseteq \bits$
and a set~$M$ of messages $\msg{\cdot}{\cdot}{p}$ such that~$p \in V$.
An \dfn{$H$-compliant} state is one where $M$ only contains $H$-compliant messages.

A \dfn{word} over a set $A\subseteq N$ is a finite sequence $w=i_1\ldots i_k$ where each $i_l \in A$.
By $A^*$ we denote the set of all words over $A$,
and we write $\setof w$ for the \emph{set} of players occurring in $w$.

Now given a set of messages $M$ and a word $w$, we introduce the following notation: 
\begin{align*}
M_w &:= \C{(\cdot,A,\cdot) \in M \mid \setof w\subseteq A}\\
\Facts(M) &:= \C{p \mid (\cdot, \cdot,p) \in M}.
\end{align*}
So $M_i$ (respectively, $M_w$) is the subset of the set of messages~$M$ that player $i$ received
(respectively, that were broadcast to all the players in~$w$; note that the order in $w$ does not matter),
and $\Facts(M)$ is the set of facts that were communicated in the messages in~$M$.
In particular, $\Facts(M_i)$ is the set of facts that were communicated
in the messages in~$M$ that player~$i$ received.
Note that \state is a state if $\Facts(M)\subseteq V$.
Further, we define all set operations to act component-wise on states, e.g.
$\state \subseteq \state[']$ iff $V \subseteq V'$ and $M \subseteq M'$.

In order to represent the knowledge of the players we define
an \dfn{indistinguishability relation} between states:
$ %\[
\mbox{$\state  \sim_i \state[']$ iff $(V_i, M_i) = (V'_i, M'_i)$.}
$ %\]

In the semantics we present below, a player $i$ is said to `know' a fact
just if that fact is true in every state that is indistinguishable for $i$
from the actual state.  Of particular interest to us is the knowledge of
\emph{groups} $G\subseteq N$ (always assumed to be non-empty).
Specifically we consider the so-called
`common knowledge' among a group (cf.~\cite[p.~23]{FHMV_RAK}).
These are facts that everybody in the group knows,
they all know that they know, etc.
To define this formally we extend the individual
indistiguishability relation to groups:
for $G \subseteq N$ the relation $\sim_G$ is the transitive
closure of $\bigcup_{i \in G}\sim_i$.

We are interested in properties definable by the following  \dfn{epistemic language} $\mathcal{L}$:
\[
 \varphi ::= p \mid \neg \varphi \mid \varphi \land \varphi \mid \varphi \lor \varphi \mid \ck G \varphi,
\]
where the atoms $p$ denote the facts in~$\bits$,
$\neg$, $\land$ and $\lor$ are the standard connectives;
and $\ck G$ is a knowledge operator, with $\ck G \varphi$ meaning $\varphi$
is common knowledge among $G$.
We write $\knows i$ for $\ck{\{i\}}$;
$\knows i \varphi$ can be read `$i$ knows that $\varphi$'.
The \dfn{positive language} $\mathcal{L}^+$ is the sublanguage of $\mathcal{L}$
in which negation ($\neg$) does not occur.
% and the propositional language $\mathcal{L}_{prop}$ is the sublanguage of $\mathcal{L}$
% in which the knowledge operator $\ck G$ does not occur.

The semantics for $\mathcal{L}$ is as follows:
\begin{align*}
  \state  &\vDash_H p                   && \iff p \in V, \\
  \state  &\vDash_H \neg \varphi            && \iff \state  \nvDash_H \varphi, \\
  \state  &\vDash_H \varphi \lor \psi       && \iff \state  \vDash_H \varphi\text{ or }\state  \vDash_H \psi, \\
  \state  &\vDash_H \varphi \land \psi      && \iff \state  \vDash_H \varphi\text{ and }\state  \vDash_H \psi, \\
  \state  &\vDash_H \ck G \varphi             && \iff
  \begin{aligned}[t]
    &\state['] \vDash_H \varphi\\
    &\text{for each $H$-compliant \state[']}\\
    &\text{with } \state \sim_{G} \state[']\enspace.
  \end{aligned}
\end{align*}
% By $\vDash_H\varphi$ we mean that $\varphi$ is a \dfn{tautology},
% that is $\state\vDash_H\varphi$ for \emph{all} $H$-compliant states \state.
% Note that for propositional~$\varphi$ this is equivalent to saying that
% all valuations satisfy $\varphi$.
% Further, note that we can have positive tautologies and even tautological facts,
% by allowing only valuations which make the particular fact true.

By allowing only $H$-compliant states in the last clause of the semantics,
the underlying hypergraph~$H$ is assumed to be \emph{common knowledge}.
Assuming that the hypergraph $H$ is \emph{unknown} turns out to be equivalent to 
the case where it is common knowledge that the hypergraph $H$ is complete,
i.e., $H=\powerset{N} - {\emptyset}$.
This might seem counter-intuitive, but it reflects the fact that if the
hypergraph is unknown then every player must consider it possible that
every set $A \subseteq N$ might be a hyperarc in $H$.
To denote the corresponding semantics,
we use $\vDash$ as abbreviation for  $\vDash_H$ with $H$ being the complete hypergraph.

For a word $w=i_1\ldots i_k$, we write $\knows w$ to abbreviate $\knows{i_1}\knows{i_2} \ldots \knows{i_k}$,
and write $\sim_w$ to denote the concatenation $\sim_{i_1} \circ \ldots \circ \sim_{i_k}$.

Notice that $\state \sim_G \state[']$ iff there is $w\in G^*$ with $\state \sim_w \state[']$.
So an equivalent way of specifying the semantics for $\ck G$ with non-singleton $G$ is as follows:
\begin{equation}
  \label{equ:K}%\label{eq:alternative-semantics-kg}
  \state \vDash \ck G \varphi\text{ iff }\state \vDash \knows w \varphi\text{ for all }w\in G^*\enspace.
  \tag{$\star$}
\end{equation}
We now study the consequences of two choices in the analysis of players' knowledge:
\begin{itemize}
\item The type of messages; we assumed already that players send only atomic information, but there still remains a choice whether, as assumed above, players only send information they know initially, or can can send information that they have learned from other players.  The former scenario is explored in \cref{sec:telling}, the latter in \cref{sec:forwarding}.
\item The issue whether the underlying hypergraph is commonly known among the players.
%, which means that it is commonly known that the states contain only messages $(\cdot,A,\cdot)$ with $A \in H$.
We consider this distinction in both of the following sections.
\end{itemize}

We shall see that both choices have bearing on players' knowledge.
% and on what is true.

\section{Telling}
\label{sec:telling}

In this section we study the case under the assumption mentioned above,
that players' messages refer only to the facts they know initially.
So players can send only information they know at the outset.
We call this contingency `telling'.

For the relevance of common knowledge of~$H$, consider the following example.
\begin{example}
  \label{ex:ck-of-h-matters-for-negative}
  For players $G=\{i,j,k\}$, $H=\{G\}$, $p\in \bits_k$ and $\state=(\emptyset,\emptyset)$,
  we have
  \[
  \state\vDash_H\knows i\neg \knows j p\enspace.
  \]
  Indeed, the only hyperarc in $H$ through which player~$j$ could learn anything from $k$
  is the one which also contains player~$i$.
  So there is no way for $k$ to tell $j$ anything `secretly'.
  Hence, with $M_i=\emptyset$,
  $i$ also knows that $M_{jk}=\emptyset$.
  That is, in all states which $i$ considers possible at \state,
  $k$ has not told $j$ that $p$,
  therefore in all these states $j$ does not know $p$.

  On the other hand, we have
$ %  \[
  \state\nvDash \knows i\neg \knows j p\enspace,
$ %  \]
  since $\state\sim_i(\{p\},\{\msg{k}{\{j,k\}}{p}\})$.

  So for some formulas, common knowledge of $H$ matters.
  Note also that in this example, we even have
  \[
  \state\vDash_H\ck G\neg \knows j p\enspace,
  \]
  which shows that common knowledge can be attained without any communication taking place.
\end{example}

However, common knowledge of formulas from the positive language $\mathcal{L}^+$
can only be attained through messages received by the whole group,
and for these formulas, common knowledge of~$H$ does not matter.
In order to establish this,
we first show the following \cref{result:positive-keep-holding}, which intuitively says that
if a positive formula is true in some state,
then it remains true in any state where more facts are true
or more communication has taken place.
Remember that $\vDash$ corresponds to $\vDash_H$ with $H$ being the complete hypergraph,
so the following carries over to general states and $\vDash$.
\begin{lemma}
  \label{result:positive-keep-holding}
  For any $\varphi\in\mathcal{L}^+$ and $H$-compliant states $\state $ and \state[']
  with $\state\subseteq\state[']$,
\[
    \textup{if } \state\vDash_H\varphi, \textup{ then } \state['] \vDash_H\varphi.
\]
\end{lemma}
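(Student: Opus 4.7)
The plan is to proceed by structural induction on $\varphi \in \mathcal{L}^+$. The atomic case is immediate from $V \subseteq V'$, and the cases for $\wedge$ and $\vee$ follow directly from the inductive hypothesis. The only nontrivial case is $\ck G \varphi$, and for that I would first isolate the following one-step auxiliary claim and then iterate it along $\sim_G$-paths.

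\textbf{Auxiliary claim.} If $(V,M) \subseteq (V',M')$ and $(V',M') \sim_i (V'',M'')$, with all three $H$-compliant, then there is an $H$-compliant state $(V''',M''')$ with $(V''',M''') \subseteq (V'',M'')$ and $(V,M) \sim_i (V''',M''')$.

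The idea is to strip from $(V'',M'')$ exactly the part that $i$ sees there but does not see at $(V,M)$. Concretely, set
\[
V''' := V'' \setminus (V''_i \setminus V_i), \qquad M''' := M'' \setminus (M''_i \setminus M_i).
\]
Since $V_i \subseteq V'_i = V''_i$ and $M_i \subseteq M'_i = M''_i$, computing the $i$-projections gives $V'''_i = V_i$ and $M'''_i = M_i$, so $(V,M) \sim_i (V''',M''')$; the inclusion into $(V'',M'')$ and $H$-compliance are immediate.

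The main obstacle is verifying that $(V''',M''')$ is a valid state, i.e.\ $\Facts(M''') \subseteq V'''$. For $(j,A,p) \in M'''$ we have $p \in V''$; the point is to rule out $p \in V''_i \setminus V_i$. Since we are in the telling setting, $p \in \bits_j$. If $j \neq i$, then $p \notin \bits_i$ and so $p \notin V''_i$. If $j = i$, then $i \in A$ forces $(j,A,p) \in M''_i$, and since $(j,A,p) \in M'''$ excludes $M''_i \setminus M_i$, we get $(j,A,p) \in M_i \subseteq M$, whence $p \in V_i$ by validity of $(V,M)$. Either way $p \in V'''$.

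\textbf{Finishing the $\ck G$ case.} Assume $(V,M) \vDash_H \ck G \varphi$ and $(V,M) \subseteq (V',M')$, and let $(V'',M'')$ be any $H$-compliant state with $(V',M') \sim_G (V'',M'')$, witnessed by a word $w = i_1 \ldots i_k \in G^*$ and intermediate $H$-compliant states $(V',M') = (V_0,M_0) \sim_{i_1} (V_1,M_1) \sim_{i_2} \cdots \sim_{i_k} (V_k,M_k) = (V'',M'')$. Applying the auxiliary claim $k$ times in sequence produces $H$-compliant states $(U_1,N_1) \subseteq (V_1,M_1), \ldots, (U_k,N_k) \subseteq (V_k,M_k)$ with $(V,M) \sim_{i_1} (U_1,N_1) \sim_{i_2} \cdots \sim_{i_k} (U_k,N_k)$. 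Hence $(V,M) \sim_G (U_k,N_k)$, so by assumption $(U_k,N_k) \vDash_H \varphi$, and since $(U_k,N_k) \subseteq (V'',M'')$ the structural induction hypothesis on $\varphi$ yields $(V'',M'') \vDash_H \varphi$, as required.
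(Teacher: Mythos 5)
Your proof is correct, and it follows the same overall strategy as the paper (structural induction, with the $\ck G$ case handled by constructing a witness state that is indistinguishable from the smaller state and included in the given larger one, then applying the hypothesis plus the inductive monotonicity), but the two key ingredients are genuinely different. The paper reduces the non-singleton $G$ case to singleton $G$ via the characterization~\eqref{equ:K}, and for $\state[']\sim_i\state['']$ it assembles the hybrid state $(V_i\cup\bigcup_{j\neq i}V''_j,\,M_i)$, whose validity follows from $\Facts(M_i)\subseteq V\cap V''$ without any appeal to the telling restriction. You instead iterate a one-step claim directly along the $\sim_G$-chain, and your witness is obtained by deletion, $(V''\setminus(V''_i\setminus V_i),\,M''\setminus(M''_i\setminus M_i))$; since you retain messages not seen by~$i$, your validity check genuinely uses the fact that in this section a message's fact belongs to its sender ($p\in\bits_j$), which is legitimate here but would not transfer to the forwarding setting of \cref{sec:forwarding}. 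What your route buys is that general $G$ is handled without invoking~\eqref{equ:K} for $\vDash_H$ (the paper states it only for $\vDash$); what the paper's construction buys is a validity argument independent of message ownership. One small imprecision on your side: you declare the intermediate states of the $\sim_G$-chain to be $H$-compliant, which the definition of $\sim_G$ (transitive closure over arbitrary states) does not guarantee. This is harmless and easily repaired: your auxiliary claim needs $H$-compliance only of the right-hand state $(V'',M'')$ in order to conclude it for $(V''',M''')$, and compliance of the intermediate states produced along the way is never actually used (only the final $(U_k,N_k)\subseteq\state['']$ must be $H$-compliant, which it is); alternatively, one may first replace each intermediate state by its restriction to $H$-compliant messages, which preserves the $\sim_{i_l}$ links and both endpoints.
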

\begin{proof}
  We proceed by structural induction on $\varphi$.
  The only not completely obvious case is when
  $\varphi=\ck G\psi$ with $\psi\in\mathcal{L}^+$.
  We show the claim for $G=\{i\}$; the non-singleton case then follows by induction and~\eqref{equ:K}.
  Take an $H$-compliant state \state[''] such that $\state[']\sim_i\state['']$.
  Let %\todo{note that $V'''$ is a valid valuation: valuations independent across players}
  \[
  \state['''] := (V_i\cup\textstyle\bigcup_{j\neq i}V_j'',M_i)\enspace.
  \]
  We have $\Facts(M_i)\subseteq\Facts(M)\subseteq V$,
  since \state is a state.
  Also, $M_i\subseteq M_i'=M_i''\subseteq M''$,
  so $\Facts(M_i)\subseteq\Facts(M'')\subseteq V''$,
  since \state[''] is a state.
  Hence,
$ %  \[
  \Facts(M''')=\Facts(M_i)\subseteq V\cap V''=\textstyle\bigcup_{i\in N}(V_i\cap V_i'')\subseteq V'''\enspace.
$ %  \]
  This shows that \state['''] is a state. % and since $M'''=M_i\subseteq M$, it is $H$-compliant. % this shouldn't matter
  Moreover, $\state\sim_i\state[''']$.
  Assume now $\state\vDash_H \knows i\psi$.
  Then we obtain $\state[''']\vDash_H\psi$.
  Further, we have $\state[''']\subseteq\state['']$
  since $V_i\subseteq V_i'=V_i''$ and $M_i\subseteq M_i'=M_i''$ due to $\state[']\sim_i\state['']$.
  Thus, by induction hypothesis we obtain $\state['']\vDash_H\psi$.
\end{proof}
% In particular, this means that a group's knowledge of \emph{positive} formulas can only grow with communication.
% \begin{corollary}
%   For any $G\subseteq N$, $\varphi\in\mathcal{L}^+$, and $H$-compliant \state and \state[']
%   with $\state[']\subseteq\state$,
% \[
%     \textup{if } \state[']\vDash_H \ck G\varphi\textup{, then } \state\vDash_H \ck G\varphi.
% \]
% \end{corollary}
% \begin{proof}
%   By \cref{result:positive-keep-holding}.
% \end{proof}

% Now we are ready to prove the mentioned result saying that
% common knowledge of $H$ does not matter for positive formulas.
\begin{theorem}
  \label{result:ck-of-h-doesnt-matter}
  For any $H$-compliant state $\state$ and $\varphi\in\mathcal{L}^+$,
\[
\mbox{$\state\vDash\varphi$ iff $\state\vDash_H\varphi$.}
\]
\end{theorem}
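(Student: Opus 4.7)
The plan is to prove both directions by structural induction on $\varphi\in\mathcal{L}^+$. The atomic and Boolean cases are immediate, since their semantic clauses make no reference to $H$. The substantive case is $\varphi = \ck G\psi$, where the induction hypothesis reads: $\state\vDash\psi$ iff $\state\vDash_H\psi$ for every $H$-compliant state~\state.

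The direction $\state\vDash\ck G\psi \Rightarrow \state\vDash_H\ck G\psi$ is routine: any $H$-compliant $\state[']$ with $\state\sim_G\state[']$ is \emph{a fortiori} a state indistinguishable from \state by $G$, so $\state[']\vDash\psi$ by hypothesis and hence $\state[']\vDash_H\psi$ by the induction hypothesis.

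For the converse, given any state $\state[']$ with $\state\sim_G\state[']$, I would unfold this into a chain $\state=\state[_0]\sim_{j_1}\state[_1]\sim_{j_2}\cdots\sim_{j_n}\state[_n]=\state[']$ with each $j_k\in G$, and then filter out non-$H$-compliant messages at every step. Writing $\mathcal{M}_H$ for the set of all $H$-compliant messages, define the shadow chain by $\state[''_k]:=(V_k,\,M_k\cap\mathcal{M}_H)$. Each $\state[''_k]$ is a state because $\Facts(M_k\cap\mathcal{M}_H)\subseteq\Facts(M_k)\subseteq V_k$, and it is $H$-compliant by construction; moreover $\state[''_0]=\state$ because $\state$ is itself $H$-compliant, and $\state[''_n]\subseteq\state[']$ by construction. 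The crucial property is that $\state[''_{k-1}]\sim_{j_k}\state[''_k]$ follows from $\state[_{k-1}]\sim_{j_k}\state[_k]$: since $H$-compliance depends only on the hyperarc appearing in a message, filtering by $\mathcal{M}_H$ commutes with restriction to any player's view, so $M''_{k-1,j_k}=M_{k-1,j_k}\cap\mathcal{M}_H=M_{k,j_k}\cap\mathcal{M}_H=M''_{k,j_k}$, while the valuations are equal by definition. Thus $\state\sim_G\state[''_n]$ is witnessed entirely by $H$-compliant states.

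From $\state\vDash_H\ck G\psi$ one then concludes $\state[''_n]\vDash_H\psi$; the induction hypothesis yields $\state[''_n]\vDash\psi$; and \cref{result:positive-keep-holding}, in the form that carries over to general states under $\vDash$, lifts this through $\state[''_n]\subseteq\state[']$ to $\state[']\vDash\psi$, completing the case. The main potential obstacle is the shadow-chain construction, but once one observes that $H$-compliance of a message does not interact with which subsets of players the set of messages is later restricted to, everything reduces to bookkeeping.
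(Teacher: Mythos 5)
Your proof is correct and takes essentially the same route as the paper: both hinge on restricting the message set to its $H$-compliant part and combining the monotonicity \cref{result:positive-keep-holding} with the induction hypothesis, yours stated directly where the paper argues by contraposition. Your shadow-chain construction merely spells out the step that the paper compresses into the one-line claim that $\state\sim_G(V',M'\restriction_H)$ follows from $\state$ being $H$-compliant and $\state\sim_G\state[']$.
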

\begin{proof}
  We proceed by structural induction.
  The only non-trivial step is when $\varphi=\ck G\psi$ with $\psi\in\mathcal{L}^+$.
  
  \noindent ($\Rightarrow$) By induction hypothesis, $\state\vDash\ck G\psi$ implies $\state\vDash_H\ck G\psi$,
  since each $H$-compliant state is also a state.

  \noindent ($\Leftarrow$) Assume to the contrary that
  $\state\nvDash\ck G\psi$.
  So there is
  a state $\state[']$ with $\state\sim_G \state[']$ and $ \state[']\nvDash\psi$.
  Now let
  \[
  M'\restriction_H:=\{\msg{\cdot}{A}{\cdot} \in M' \mid A\in H\}\enspace.
  \]
  So $M'\restriction_H$ consists of all $H$-compliant messages of $M'$.
  Now note that $( V', M'\restriction_H)\subseteq \state[']$,
  so from $ \state[']\nvDash\psi$
  we obtain that $(V', M'\restriction_H)\nvDash\psi$
  using \cref{result:positive-keep-holding} (which, as noted, also holds for general states and $\vDash$).
  Since $( V', M'\restriction_H)$ is $H$-compliant, 
  the induction hypothesis yields $(V', M'\restriction_H)\nvDash_H\psi$.
  Moreover, we also have $\state\sim_G(V', M'\restriction_H)$,
  since \state is $H$-compliant and $\state\sim_G \state[']$.
  Thus, $\state\nvDash_H\ck G\psi$.
\end{proof}

In the remainder of this section, we are concerned with formulas from~$\mathcal{L}^+$,
so in view of the above results we restrict attention to $\vDash$.

We now establish that $\ck G$ distributes over disjunctions of positive formulas,
starting with singleton~$G$.
%\todo{for version with restricted valuations: note that this holds even if $\varphi_1\lor\varphi_2$ is a tautology}
\begin{lemma}
  \label{result:k-disjunction-distributes}
  For any $\varphi_1,\varphi_2\in\mathcal{L}^+$, $i\in N$, and state \state,
\[
\mbox{$\state\vDash \knows i(\varphi_1\vee\varphi_2)$ iff $\state\vDash \knows i\varphi_1\vee \knows i\varphi_2$.}
\]
\end{lemma}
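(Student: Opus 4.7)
The plan is to prove the non-trivial direction $(\Rightarrow)$ by contraposition, using the downward-preservation of positive-formula falsity that comes for free from \cref{result:positive-keep-holding}. The easy direction $(\Leftarrow)$ is immediate: if $i$ knows one disjunct, then in every $i$-accessible state that disjunct (and hence the disjunction) holds.

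For the hard direction, suppose $\state \nvDash \knows i \varphi_1$ and $\state \nvDash \knows i \varphi_2$. Pick witnessing states $\state[^1]$ and $\state[^2]$ with $\state \sim_i \state[^j]$ and $\state[^j] \nvDash \varphi_j$ for $j=1,2$. The key construction is the componentwise intersection
\[
\state[^*] := (V^1 \cap V^2,\; M^1 \cap M^2).
\]
I would then verify three properties of $\state[^*]$. First, $\state[^*]$ really is a state: if $(\cdot,\cdot,p) \in M^1 \cap M^2$, then $p \in \Facts(M^1) \subseteq V^1$ and $p \in \Facts(M^2) \subseteq V^2$, so $p \in V^1 \cap V^2$. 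Second, $\state \sim_i \state[^*]$: since $V^1_i = V_i = V^2_i$ and $M^1_i = M_i = M^2_i$, the intersection still has the same $i$-component. Third, $\state[^*] \subseteq \state[^j]$ for each $j$.

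Now the main leverage is the contrapositive of \cref{result:positive-keep-holding}: since $\varphi_j \in \mathcal{L}^+$ and $\state[^*] \subseteq \state[^j]$, the failure $\state[^j] \nvDash \varphi_j$ descends to $\state[^*] \nvDash \varphi_j$. Hence $\state[^*] \nvDash \varphi_1 \vee \varphi_2$. Combined with $\state \sim_i \state[^*]$, this gives $\state \nvDash \knows i(\varphi_1 \vee \varphi_2)$, as required.

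The only subtlety I expect to watch is making sure the intersection argument is legal in the ``unknown hypergraph'' semantics $\vDash$ we have restricted to (per the remark preceding the lemma), where every message is automatically compliant so no extra closure condition needs to be checked on $M^1 \cap M^2$; and being careful about the direction of monotonicity, which is the point where sign errors are easy to make.
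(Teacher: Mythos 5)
Your proposal is correct and follows essentially the same route as the paper: the hard direction is proved by contraposition, building a state that is $\sim_i$-indistinguishable from \state and componentwise below both witness states, and then invoking \cref{result:positive-keep-holding} to pull the failure of each disjunct down to it. The only difference is cosmetic: you take the full componentwise intersection $(V'\cap V'',\,M'\cap M'')$, while the paper uses $(V_i\cup\bigcup_{j\neq i}(V_j'\cap V_j''),\,M_i)$, whose valuation component coincides with $V'\cap V''$ and whose message component is just a smaller admissible choice.
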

\begin{proof}
  To deal with the ($\Rightarrow$) implication assume that
  $\state\nvDash \knows i\varphi_1\vee \knows i\varphi_2$.  Then
  $\state\nvDash \knows i\varphi_1$ and $\state\nvDash \knows i\varphi_2$,
  i.e., there are $ \state[']$ and $\state['']$ such that
  \begin{align*}
    \state\sim_i \state[']&\text{ and }\state[']\nvDash\varphi_1\enspace\text{, as well as}\\
    \state\sim_i\state['']&\text{ and }\state['']\nvDash\varphi_2\enspace.
  \end{align*}
  Let now %\todo{note that $V'''$ is a valid valuation: valuations closed for intersections, and independent across players}
  \[
  \state[''']:=( V_i\cup\textstyle\bigcup_{j\neq i}(V_j'\cap V_j''),M_i)\enspace.
  \]

  Then $\Facts(M)\subseteq V$, $\Facts(M')\subseteq V'$, $\Facts(M'')\subseteq V''$,
  since \state, \state['] and \state[''] are states.
  Moreover, $M_i=M_i'$ and $M_i=M_i''$.
  So,
  \begin{align*}
    \Facts(M''')&\subseteq\Facts(M)\cap\Facts(M')\cap\Facts(M'')\\
    &\subseteq V\cap V'\cap V''\\
    &=\textstyle\bigcup_{i\in N}(V_i\cap V_i'\cap V_i'')\\
    &\subseteq V'''\enspace.
  \end{align*}
  This shows that \state['''] is a state,
  and since $M'''=M_i\subseteq M$ it is $H$-compliant.

  Now since $V_i=V_i'=V_i''$ and $M_i=M_i'=M_i''$,
  we have $\state[''']\subseteq\state[']$ and $\state[''']\subseteq\state['']$.
  By \cref{result:positive-keep-holding}, we obtain
  $\state[''']\nvDash\varphi_1$
  and
  $\state[''']\nvDash\varphi_2$,
  thus
  $\state[''']\nvDash\varphi_1\vee\varphi_2$.
  Furthermore $\state\sim_i\state[''']$,
  so $\state\nvDash \knows i(\varphi_1\vee\varphi_2)$.

  Further, the ($\Leftarrow$) implication immediately holds by the semantics.  
\end{proof}

\begin{theorem}
  \label{result:ck-disjunction-distributes}
  For any $\varphi_1,\varphi_2\in\mathcal{L}^+$, state \state, and $G\subseteq N$,
\[
\mbox{$\state\vDash \ck G(\varphi_1\vee\varphi_2)$ iff $\state\vDash \ck G\varphi_1\vee \ck G\varphi_2$.}
\]
\end{theorem}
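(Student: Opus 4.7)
The plan is to reduce to the singleton case, Lemma \ref{result:k-disjunction-distributes}, via the characterisation \eqref{equ:K}. First I would establish, by induction on $|w|$, that for any $\varphi_1,\varphi_2\in\mathcal{L}^+$ and $w\in N^*$,
\[
\state\vDash\knows w(\varphi_1\vee\varphi_2)\iff\state\vDash\knows w\varphi_1\vee\knows w\varphi_2.
\]
The inductive step, writing $w=iw'$, applies the induction hypothesis inside the outer $\knows i$ (legitimate since $\mathcal{L}^+$ is closed under knowledge operators, and semantic equivalents can be substituted inside a modality), and then applies Lemma \ref{result:k-disjunction-distributes} to strip $\knows i$. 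Combined with \eqref{equ:K}, this reformulates the hypothesis as: for every $w\in G^*$, $\state\vDash\knows w\varphi_1\vee\knows w\varphi_2$.

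The $(\Leftarrow)$ direction of the theorem is immediate from the semantics. For $(\Rightarrow)$, I argue by contraposition: if $\state\nvDash\ck G\varphi_1$ and $\state\nvDash\ck G\varphi_2$, then by \eqref{equ:K} there are witness words $w_1,w_2\in G^*$ with $\state\nvDash\knows{w_1}\varphi_1$ and $\state\nvDash\knows{w_2}\varphi_2$. Applying the reformulation above to the concatenation $w_1w_2\in G^*$ yields $\state\vDash\knows{w_1w_2}\varphi_1$ or $\state\vDash\knows{w_1w_2}\varphi_2$.

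To derive the contradiction, I would invoke the auxiliary fact that $\knows w\psi\vDash\knows{w'}\psi$ whenever $w'$ is a subword of $w$. This follows from the reflexivity of each $\sim_i$ (trivial from its definition), which yields truthfulness $\knows i\psi\vDash\psi$, together with the standard monotonicity of $\knows i$ over semantic entailment, so that any letter may be removed from a nested sequence of knowledge operators. Applying this with $w=w_1w_2$ and $w'\in\{w_1,w_2\}$ contradicts the choice of the witness words.

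The key obstacle is the uniformity gap between ``for every $w\in G^*$, at least one of $\knows w\varphi_1,\knows w\varphi_2$ holds'' and ``for some $j$ uniformly, $\knows w\varphi_j$ holds for every $w$''; the concatenation trick combined with the truthfulness/monotonicity observation is precisely what bridges this gap, which is why simply iterating Lemma \ref{result:k-disjunction-distributes} is not by itself enough.
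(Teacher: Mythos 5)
Your proof is correct and takes essentially the paper's route: the paper's own proof is exactly the one-line observation that the claim follows from \cref{result:k-disjunction-distributes} and~\eqref{equ:K}, which is the reduction you carry out. Your additional concatenation argument---using reflexivity of $\sim_i$, hence factivity, so that $\knows{w}\psi$ entails $\knows{w'}\psi$ for subwords $w'$ of $w$---is a sound and indeed necessary filling-in of the uniformity step (``at least one disjunct for every $w$'' versus ``one disjunct uniformly for all $w$'') that the paper's word ``directly'' leaves implicit.
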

\begin{proof}
The claim follows directly from \cref{result:k-disjunction-distributes} and~\eqref{equ:K}.
\end{proof}

To see that this result does not hold if we allow negation,
consider three players $i,j,k\in N$, $p\in V_k$,
$\state=(\emptyset,\emptyset)$, and
$\varphi=\knows i( \knows j p\vee\neg \knows j p)$.
Then $\state\vDash\varphi$, since the used disjunction is a tautology,
but there is no way for $i$ to know which disjunct is true.

Even with non-tautological disjunctions, the result does not hold.
\begin{example}
  \label{ex:k-disjunction-doesnt-distribute-with-negation}
  With $p\in V_k$ and
  \begin{align*}
    \state&=(\{p\},\{\msg{k}{\{i,k\}}{p}\})\\
    \varphi&=\knows i(\knows j p\vee\neg(\knows j p\vee \knows j\neg p))\enspace,
  \end{align*}
  we have $\state\vDash\varphi$, but again, $i$ knows neither disjunct in $\state$.
  Intuitively, having privately learned that $p$ is true,
  $i$ knows that $j$ either also learned it,
  or that $j$ doesn't know whether $p$ is true,
  but $i$ does not know which of these two statements is true.
\end{example}

Another observation is that
mutual knowledge of any \emph{fact} $p\in\bits$ %propositional formula that is not a tautology
can only be obtained through a corresponding message, %through corresponding messages
and is thus inseparably tied to common knowledge.

\begin{lemma}
  \label{result:knowledge-chain-fact-equiv-msg-equiv-ck}
  For any $w\in N^*$ with $\abs{\setof w}\geq2$,
  $p\in\bits$, % $\varphi\in\mathcal{L}_{prop}$,
  and state \state,
  the following are equivalent:
  \begin{enumerate}[(i)]
  \item\label{result:knowledge-chain-fact-equiv-msg-equiv-ck:chain}
    $\state\vDash \knows wp$,%\varphi
  \item\label{result:knowledge-chain-fact-equiv-msg-equiv-ck:msgs}
    % $\Facts(M_w)\vDash p$,
    there is $\msg{\cdot}{\cdot}{p}\in M_w$,
  \item\label{result:knowledge-chain-fact-equiv-msg-equiv-ck:ck}
    $\state\vDash \ck Gp$ %\varphi
    with $G=\setof w$.
  \end{enumerate}
\end{lemma}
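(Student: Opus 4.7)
The plan is to prove the equivalence as a cycle (iii) $\Rightarrow$ (i) $\Rightarrow$ (ii) $\Rightarrow$ (iii). The first implication, (iii) $\Rightarrow$ (i), is immediate from~\eqref{equ:K}: since $\setof w = G$, we have $w \in G^*$, so $\ck G p$ directly yields $\knows w p$.

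For (ii) $\Rightarrow$ (iii), fix a message $\msg{\cdot}{A}{p} \in M_w$, so $G = \setof w \subseteq A$. Using~\eqref{equ:K}, it suffices to verify $\knows{w'} p$ for every $w' = j_1 \ldots j_m \in G^*$. Along any chain $\state = (V^{(0)}, M^{(0)}) \sim_{j_1} \cdots \sim_{j_m} (V^{(m)}, M^{(m)})$, each $j_s \in G \subseteq A$ receives the message $\msg{\cdot}{A}{p}$, so it is preserved at every step and hence lies in $M^{(m)}$. Since $(V^{(m)}, M^{(m)})$ is a state, $p \in \Facts(M^{(m)}) \subseteq V^{(m)}$.

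For (i) $\Rightarrow$ (ii), I argue contrapositively: assume that no $p$-message belongs to $M_w$, and build a chain $\state = (V^{(0)}, M^{(0)}) \sim_{i_1} \cdots \sim_{i_k} (V^{(k)}, M^{(k)})$ with $p \notin V^{(k)}$. If $p \notin V$ the constant chain suffices. Otherwise, let $i^*$ denote the unique owner of $p$ (i.e.\ $p \in \bits_{i^*}$), and let $l$ be the largest index with $i_l \neq i^*$; this $l$ exists because $\abs{\setof w} \geq 2$. Define
\[
M^{(j)} := M^{(j-1)} \setminus \{(\cdot, A, p) : i_j \notin A\},
\]
and set $V^{(j)} := V$ for $j < l$ and $V^{(j)} := V \setminus \{p\}$ for $j \geq l$. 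Any $p$-message must have sender $i^*$, hence $i^* \in A$; by hypothesis some $i_j \notin A$, and this $j$ cannot exceed $l$, since for $j > l$ we would have $i_j = i^*$ by maximality of $l$, contradicting $i^* \in A$. Thus every $p$-message is deleted by step $l$, so $M^{(j)}$ contains no $p$-messages for $j \geq l$, which makes each $(V^{(j)}, M^{(j)})$ a valid state. The indistinguishability checks are routine: at step $l$ we need $V^{(l-1)}_{i_l} = V^{(l)}_{i_l}$, which holds because $p \notin \bits_{i_l}$ by choice of $l$; at all other steps $V^{(j-1)} = V^{(j)}$, and in every step the removed messages are not received by $i_j$ by construction.

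The main obstacle is the careful choice of $l$ in the contrapositive argument. When $i^* \in \setof w$, transitions via $\sim_{i^*}$ cannot alter the truth of $p$, since $i^*$ sees it directly through its valuation. Picking $l$ as the \emph{latest} non-$i^*$ position in $w$ serves two purposes simultaneously: it is a step at which we can legitimately flip $p$ from true to false, and it is late enough that for every $p$-message $\msg{i^*}{A}{p}$ the required excluded member $i_j \notin A$ can only lie among $i_1, \ldots, i_l$, so the message will have been removed before we impose $p \notin V^{(j)}$.
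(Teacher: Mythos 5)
Your proof is correct, and for two of the three implications it matches the paper: (iii)$\Rightarrow$(i) is the same appeal to~\eqref{equ:K}, and (ii)$\Rightarrow$(iii) is the same propagation of a fixed message $\msg{\cdot}{A}{p}$ with $\setof w\subseteq A$ along an arbitrary $\sim_G$-chain, using that the endpoint is a state. The genuine difference is in (i)$\Rightarrow$(ii). The paper also argues contrapositively, but in a single step: it picks $j\in\setof w$ with $p\notin\bits_j$, asserts $(V,M)\sim_j(V\setminus\{p\},M_w)$, and pads the remaining positions of $w$ with reflexive steps. That assertion needs $M_j=(M_w)_j=M_w$, which fails as soon as $j$ has received any message outside $M_w$; moreover, if every non-owner in $\setof w$ has received a private $p$-message (say $\setof w=\{j_1,j_2\}$, $p\in\bits_{i^*}$ with $i^*\notin\setof w$, and $M=\{\msg{i^*}{\{i^*,j_1\}}{p},\msg{i^*}{\{i^*,j_2\}}{p}\}$), then \emph{no} state reachable from $(V,M)$ in one $\sim_{i_m}$-step with $i_m\in\setof w$ can falsify $p$, so the one-jump scheme cannot be repaired merely by choosing a better target state. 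Your multi-step construction --- at stage $j$ delete exactly the $p$-messages that $i_j$ did not receive, and drop $p$ from the valuation at the last position occupied by a non-owner of $p$ --- is built for precisely these situations, and your bookkeeping is sound: each $(V^{(j)},M^{(j)})$ is a state, the $\sim_{i_j}$-checks hold because only messages unseen by $i_j$ are removed and $p\notin\bits_{i_l}$, and all $p$-messages are gone by stage $l$ since any such message has sender $i^*\in A$ while some excluded $i_j\notin A$ must occur at a position $j\leq l$. So your route is somewhat longer, but it is robust where the paper's shortcut is too coarse, and it gives a complete argument for the implication the paper treats most briefly.
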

\begin{proof}
  $(\ref{result:knowledge-chain-fact-equiv-msg-equiv-ck:chain})\Rightarrow
  (\ref{result:knowledge-chain-fact-equiv-msg-equiv-ck:msgs})$:
%  \todo[We show the base case where $\varphi=p$ for some $p\in\bits$,
%  the rest follows by structural induction using \cref{result:ck-disjunction-distributes}.]{verify (clear for conj, but neg and disj?) or replace $\varphi$ by $p$}
  Assume that $(\ref{result:knowledge-chain-fact-equiv-msg-equiv-ck:msgs})$ does not hold.
%  then there is $V'$ such that
%  $\Facts(M_w)\subseteq V'$ and $p\not\in V'$.
  Let $V':=V\setminus\{p\}$. %\todo{check that $V'$ is a valuation!}
  Then $(V',M_w)$ is a state and $(V',M_w)\nvDash p$.
  Now let $i\in N$ be such that $p\in\bits_i$.
  Since $\abs{\setof w}\geq2$, there is $j\in\setof w$ with $p\not\in\bits_j$.
  By construction, $(V',M_w)\nvDash p$ and $(V,M)\sim_j(V',M_w)$.
  Since $j\in\setof w$ and both $\state\sim_k\state$ and $(V',M_w)\sim_k(V',M_w)$ for all $k\in N$,
  we obtain $\state\sim_w(V',M_w)$ and thus $\state\nvDash \knows wp$.

  \noindent $(\ref{result:knowledge-chain-fact-equiv-msg-equiv-ck:msgs})\Rightarrow
  (\ref{result:knowledge-chain-fact-equiv-msg-equiv-ck:ck})$:
Suppose that $G=\setof w$ and take $m \in M_w$.
% $m \in M$ is of the form 
% $\msg{\cdot}{A}{p}\in M$ with $G \subseteq A$.
Consider $\state[']$ such that $\state \sim_{G} \state[']$.
This means that for a sequence
$i_1, \LL, i_k$ of players from $G$
  and some states $\state[^1], \LL, \state[^k]$
  we have $\state \sim_{i_1} \state[^1] \sim_{i_2} \LL \sim_{i_k} \state[^k]$,
  where $\state['] = \state[^k]$.
But $i_1 \in G$, so $m \in M_{i_1}$, and consequently $m \in
M^{1}_{i_1}$. Also $i_2 \in G$, so $m \in M^{1}_{i_2}$,
and consequently $m \in M^{2}_{i_2}$.  Continuing this way we conclude
that $m \in M^{k}_{i_k}$, that is $m \in M'_{i_k}$. 

Hence, $m \in M'$. This shows that $M_w \sse M'$.  So $\Facts(M_w) \sse
V'$, since $(V', M')$ is a state. But by the assumption $p \in
Facts(M_w)$, so $\state['] \vDash p$. This proves $\state \vDash \ck G
p$.

  \noindent $(\ref{result:knowledge-chain-fact-equiv-msg-equiv-ck:ck})\Rightarrow
  (\ref{result:knowledge-chain-fact-equiv-msg-equiv-ck:chain})$: By~\eqref{equ:K}.
\end{proof}

We can extend this connection between mutual and common knowledge to arbitrary positive formulas.

\begin{theorem}
  \label{result:knowledge-chain-phi-equiv-ck}
  \label{thm:permutation}
  For any $G\subseteq N$, $\varphi\in\mathcal{L}^+$, and state \state,
  \begin{center}
    $\state\vDash \ck G\varphi$ iff $\state\vDash \knows w\varphi$ for some $w\in G^*$ with $\setof w=G$.
  \end{center}
\end{theorem}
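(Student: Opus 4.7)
The ($\Rightarrow$) direction is immediate from~\eqref{equ:K}: $\state\vDash\ck G\varphi$ gives $\state\vDash\knows{w}\varphi$ for every $w\in G^*$, so in particular for any chosen $w$ with $\setof w=G$. The content of the theorem is in the ($\Leftarrow$) direction, which I plan to establish by structural induction on $\varphi$, thereby generalising~\cref{result:knowledge-chain-fact-equiv-msg-equiv-ck} (which plays the role of the base case) from facts to all of~$\mathcal{L}^+$.

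For the base case $\varphi=p$, when $\abs G\geq 2$ the claim is exactly the equivalence of parts (ii) and (iii) in~\cref{result:knowledge-chain-fact-equiv-msg-equiv-ck}; when $G=\{i\}$, the only admissible words are $w=i^k$, and the equivalence $\knows{i^k}p\Leftrightarrow\knows i p$ follows because $\sim_i$ is an equivalence relation. The conjunction case is straightforward: since $\knows w$ distributes over~$\land$, one obtains $\knows w\varphi_1$ and $\knows w\varphi_2$, applies the induction hypothesis to each, and concludes $\ck G\varphi_1\land\ck G\varphi_2$. For the disjunction case, I would peel off the knowledge operators in $\knows w=\knows{i_1}\cdots\knows{i_k}$ one layer at a time, invoking~\cref{result:k-disjunction-distributes} at each step (valid since all intermediate formulas remain in $\mathcal{L}^+$) to reach $\knows w\varphi_1\vee\knows w\varphi_2$; the induction hypothesis then produces $\ck G\varphi_1\vee\ck G\varphi_2$, and~\cref{result:ck-disjunction-distributes} packages this back up as $\ck G(\varphi_1\vee\varphi_2)$.

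The genuinely tricky case, and the main obstacle, is $\varphi=\ck{G'}\psi$, where the induction hypothesis speaks about $\psi$ but not directly about $\ck{G'}\psi$. My plan is to fix any $v_0\in(G')^*$ with $\setof{v_0}=G'$, and use~\eqref{equ:K} to observe that $\ck{G'}\psi$ entails $\knows{v_0}\psi$ at every state. Thus $\state\vDash\knows w\ck{G'}\psi$ yields $\state\vDash\knows w\knows{v_0}\psi$, which coincides with $\state\vDash\knows{wv_0}\psi$. Since $\setof{wv_0}=G\cup G'$, the induction hypothesis applied to $\psi$ and the group $G\cup G'$ gives $\state\vDash\ck{G\cup G'}\psi$. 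A one-line unfolding of definitions---if $\state\sim_G\state[']$ and $\state[']\sim_{G'}\state['']$ then $\state\sim_{G\cup G'}\state['']$---shows $\ck{G\cup G'}\psi\Rightarrow\ck G\ck{G'}\psi$, closing the case.
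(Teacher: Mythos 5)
Your proposal is correct and follows essentially the same route as the paper: ($\Rightarrow$) directly from~\eqref{equ:K}, and ($\Leftarrow$) by structural induction with \cref{result:knowledge-chain-fact-equiv-msg-equiv-ck} as the base case, \cref{result:k-disjunction-distributes,result:ck-disjunction-distributes} for disjunction, the semantics for conjunction, and, for the epistemic case, applying the induction hypothesis to the enlarged group and composing the indistinguishability relations. The only differences are cosmetic: you treat the general case $\ck{G'}\psi$ (the paper writes out only $\knows i\psi$, i.e.\ singleton $G'$), and in the base case the relevant equivalence is between parts (\ref{result:knowledge-chain-fact-equiv-msg-equiv-ck:chain}) and (\ref{result:knowledge-chain-fact-equiv-msg-equiv-ck:ck}) of that lemma rather than (\ref{result:knowledge-chain-fact-equiv-msg-equiv-ck:msgs}) and (\ref{result:knowledge-chain-fact-equiv-msg-equiv-ck:ck}), which is immaterial since the lemma states all three equivalent.
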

\begin{proof}
  The direction $(\Rightarrow)$ is by~\eqref{equ:K}.

  For $(\Leftarrow)$, we proceed by structural induction.
  The base case
  %\todo[is immediately clear for tautological facts, and otherwise]{for the case that we allow restrictions on valuations}
  is obtained from \cref{result:knowledge-chain-fact-equiv-msg-equiv-ck}.
  The induction step for disjunction follows by
  \cref{result:ck-disjunction-distributes},
  and for conjunction it follows directly by definition of the semantics.
  For $\varphi=\knows i\psi$,
  the assumption $\state\vDash \knows w\knows i\psi$ yields, by induction hypothesis,
  that $\state\vDash \ck{G'}\psi$ for $G'=\setof w\cup\{i\}=G\cup\{i\}$,
  which by definition of the semantics implies that $\state\vDash \ck G\knows i\psi$.
\end{proof}
Note that this result provides for positive formulas
a simplified characterization of the common knowledge operator,
as compared with~\eqref{equ:K}.

Finally, we establish a result intuitively saying that a group's
common knowledge of a %\todo[non-tautological]{for the case we allow restrictions on valuations}
positive formula can only be achieved
when some message (or messages) has been broadcast to at least all
members of this group.  So common knowledge of a positive formula cannot be
achieved among a group by means of more limited communications, for example
point-to-point messages.
Given a formula $\varphi$ we denote by $Facts(\varphi)$ the set of facts that occur in it.

\begin{theorem}
  \label{result:knowledge-chain-phi-only-through-msg}
  \label{thm:pos}
  For any $G\subseteq N$ with $\abs G\geq2$, %\todo[non-tautological]{}
  $\varphi\in\mathcal{L}^+$, and state \state,
  \begin{center}
    if $\state\vDash \ck G\varphi$, then there is $\msg{\cdot}{A}{p}\in M$ with $G\subseteq A$ and $p \in Facts(\varphi)$.
    % \todo{p implied for restricted valuations}
  \end{center}
\end{theorem}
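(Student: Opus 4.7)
The plan is to proceed by structural induction on $\varphi\in\mathcal{L}^+$, keeping the group $G$ (with $\abs G\geq 2$) fixed throughout. The base case will be handled by combining \cref{thm:permutation}, which realises $\ck G\varphi$ as $\knows w\varphi$ for some $w\in G^*$ with $\setof w = G$, with \cref{result:knowledge-chain-fact-equiv-msg-equiv-ck}; disjunctions will be dissolved via \cref{result:ck-disjunction-distributes}; and the nested common knowledge case will be reduced to a simpler instance by exploiting truthfulness of the knowledge operator.

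For the atomic base case $\varphi = p$, \cref{thm:permutation} gives a word $w\in G^*$ with $\setof w = G$ such that $\state\vDash \knows w p$. Since $\abs G\geq 2$ guarantees $\abs{\setof w}\geq 2$, \cref{result:knowledge-chain-fact-equiv-msg-equiv-ck} produces a message $(\cdot, A, p)\in M_w$; by definition of $M_w$ this means $G = \setof w \subseteq A$, and of course $p\in \Facts(\varphi)$. The conjunction case is immediate: $\ck G(\varphi_1\wedge\varphi_2)$ entails $\ck G\varphi_1$ by the semantics, and the induction hypothesis applied to $\varphi_1$ delivers a message whose fact lies in $\Facts(\varphi_1)\subseteq\Facts(\varphi)$. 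For a disjunction $\varphi = \varphi_1\vee\varphi_2$, \cref{result:ck-disjunction-distributes} reduces the hypothesis to $\ck G\varphi_i$ for some $i\in\{1,2\}$, and the induction hypothesis again yields the desired message.

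The only case needing real thought is $\varphi = \ck{G'}\psi$. Here the key observation is truthfulness: since each $\sim_i$ is reflexive, $\knows i\chi$ implies $\chi$ at every state, and therefore $\ck{G'}\psi$ implies $\psi$ for any nonempty $G'$ (just pick some $i\in G'$ and appeal to $\knows i\psi\to\psi$). Consequently $\ck G\ck{G'}\psi$ implies $\ck G\psi$, because at every state reachable via $\sim_G$ the formula $\ck{G'}\psi$, and hence $\psi$, must hold. Applying the induction hypothesis to $\ck G\psi$ then yields the required message $(\cdot, A, p)\in M$ with $G\subseteq A$ and $p\in \Facts(\psi) = \Facts(\ck{G'}\psi)$. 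The main obstacle is precisely this nested case: one has to notice that truthfulness allows stripping off the inner $\ck{G'}$, so that the inductive hypothesis is applied to $\ck G\psi$ with the outer group $G$---which is what needs to be contained in the hyperarc---rather than to $\ck{G'}\psi$ with the inner $G'$.
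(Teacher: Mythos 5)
Your proof is correct and takes essentially the same route as the paper's: both rely on distributivity of $\ck G$ over disjunction (\cref{result:ck-disjunction-distributes}) and over conjunction, on truthfulness (reflexivity of $\sim$) to strip inner knowledge operators, and on \cref{result:knowledge-chain-fact-equiv-msg-equiv-ck} at the atomic level. The paper merely packages these steps as an explicit normal-form transformation of $\ck G\varphi$ into conjunctions and disjunctions of formulas $\ck G\ck{G_1}\dots\ck{G_l}p$, whereas you organize them as a structural induction.
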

\begin{proof}
  By \cref{result:ck-disjunction-distributes} and the definition of semantics,
  we can transform $\ck G\varphi$ into an equivalent formula
  consisting only of disjunctions and conjunctions over formulas of the form $\ck G\ck{G_1}\dots \ck{G_l}p$
  with $G_1,\dots,G_l\subseteq N$ and $p\in Facts(\varphi)$.
  Since $\state\vDash \ck G\varphi$ % \todo[and $\varphi$ is non-tautological,]{}
  there is at least one of these formulas for which
  $\state\models \ck G\ck{G_1}\dots \ck{G_l}p$. % \todo[and $p$ is non-tautological.]{}

  Take now $w$ such that $\setof w = G$.
  By~\eqref{equ:K} we obtain $\state\vDash \knows w \ck{G_1}\dots \ck{G_l} p$, so
by the definition of semantics $\state\vDash \knows w p$.
  The claim now follows by \cref{result:knowledge-chain-fact-equiv-msg-equiv-ck}.
%   Take now $w\in G^*G_1^*\dots G_l^*$ such that $G\subseteq\setof w$.
%   By~\eqref{equ:K} used $l+1$ times, we obtain that $\state\vDash \knows w p$.
%   The claim now follows by \cref{result:knowledge-chain-fact-equiv-msg-equiv-ck}.
\end{proof}

\section{Forwarding}
\label{sec:forwarding}

We now consider a more complex situation in which players are allowed
to send facts that they learned from other players.
We call this contingency `forwarding'.
It is achieved by
relaxing in the definition of a message $(i,A,p)$ the assumption 
$p \in \bits_i$ to $p \in \bits$.
We still insist that a player can send a message only to a group to which he belongs,
that is, that $i \in A$ holds. 

We also assume that only information \emph{known to be true} is sent,
so we now need to examine how a player learned the information he is sending.
This brings us to consider the following relation on the set of messages:
\[
\mbox{$(j,B,p) \leadsto (i,A,p)$ iff $p \not\in \bits_i$ and $i \in B$.}
\]
Intuitively, $(j,B,p) \leadsto (i,A,p)$ means that the fact $p$ is initially not known to player~$i$ 
and that he has learned it from a message sent by player~$j$ to a group to which $i$ belongs.
So $(j,B,p) \leadsto (i,A,p)$ means that $(j,B,p)$ is a possible (partial) \emph{explanation} of $(i,A,p)$.

By a \oldbfe{state} we now mean a pair $(V, M)$ such that for
each message $(i,A,p) \in M$ a sequence of messages $(j_1, B_1, p),
\LL, (j_k, B_k, p)$ exists (i.e., each of these messages about $p$ is in $M$) such
that
\begin{compactitem}
\item these messages form an explanatory chain:
  for $l \in \{1, \LL, k-1\}$ we have $(j_l, B_l, p) \leadsto (j_{l+1}, B_{l+1}, p)$;
\item they are not circular: players $j_1, \LL, j_k$ are all different;
\item $p$ is initially known to player~$j_1$: $p \in V_{j_1}$; and
\item the fact $p$ reaches player~$i$: $(j_k, B_k, p) = (i,A,p)$.
\end{compactitem}
We call such a sequence of messages an \oldbfe{explanation} for $(i,A,p)$ in 
$(V, M)$. So a pair $\state$ is a state if for each of its messages it has an explanation. 

Note that given a state, its messages contain only true facts. That
is, if $(V, M)$ is a state, then $\Facts(M) \sse V$.
Moreover, if $(i,A,p) \in M$, then player $i$ knows that $p$ is true,
i.e.~$(V,M) \vDash \knows i p$.
%
%
% We could prove a converse to this to:
% we have defined the smallest class K of states such that
%  - (V,\emptyset) \in K \forall V \subseteq At
%  - (V,M) \vDash \knows i p_j ~ \& ~ A \in H \Rightarrow (V, M \cup \{(i,A,p_j)\}) \in K
%
%
(A more general statement is established in Lemma \ref{lem:equiv_i}.)
Note also that when each message in $M$ is of the form $(i,\cdot,p)$,
where $p \in V_i$, then $(V, M)$ is a state, since each message then forms
its own explanation. So states considered in this section generalize the
states considered in the previous section.

Each state can be alternatively viewed as a partial ordering
$\leadsto^*$ (where $\leadsto^*$ is the reflexive, transitive
closure of the $\leadsto$ relation) on a set of messages such that each message
has an explanation.

% This view provides a distributed computing perspective in which the
% players are viewed as processes who communicate by sending messages.
% Each state is then an outcome of a distributed computation in which
% the actions consist in sending a message to a group to which one
% belongs.
% It corresponds to the partial order semantics, see, e.g.
% \citet{DMN90}, that aims at describing a distributed computation
% without referring to a global state.
% In this setting no private
% actions exist.  In particular, the processes (players) have no
% reasoning capabilities.

The only restriction on the order of the actions comes from the
relation $\leadsto$ that needs to be respected: a player sends a
message that contains information that either he initially knows to be true (the
message is $(i,A,p)$ where $p \in V_i$) or he has learned
(the message is $(i,A,p)$ and some earlier message is of the form
$(j,B,p)$, where $i \in A$).  So the computation begins by some
players who send information they know is true.

We now consider the semantics introduced in \cref{sec:defs} in this
extended setting. It is important to realize that these two semantics
differ in the sense that for a state
and a formula $\varphi \in {\cal L}$ it can happen that
$\state\vDash_H \varphi$ holds in the sense of \cref{sec:defs} but not
in the sense considered now.

\begin{example}
  \label{ex:semantics-different}
  Let $N = \C{i,j,k}$, $H =
  \{\C{i,j}, \C{j,k}\}$, $V = \{p\}$, where $p \in At_i$, and $M =
  \{(i,\{i,j\},p)\}$.  Then we have
  \[
  \state \vDash_H \knows i \neg \knows k p
  \]
  in the sense of \cref{sec:defs}. The intuitive reason is that the fact
  $p$ `belongs' to $i$, so it cannot be used in any message sent by $j$,
  and this information is known to $i$.  However, in the present setting
  $p$ can be used in a message sent by $j$ and we have
  \[
  \state \nvDash_H \knows i \neg \knows k p\enspace.
  \]
  Indeed, consider $(V',M')$ with $V' = \{p\}$ and $M' = \{(i,\{i,j\},p),(j,\{j,k\},p)\}$.
  Then $(V, M) \sim_i (V', M')$ and $(V', M') \vDash \knows k p$.
\end{example}
In general, only non-epistemic formulas have
the same meaning w.r.t.~both semantics.

We now show that some, though not all, properties
established in the previous section also hold in this new setting.
In particular, as in \cref{ex:ck-of-h-matters-for-negative}, we have
$(\ES, \ES) \vDash_H\ck G\neg \knows jp$, so
also now common knowledge can exist without any communication
taking place.
However, as we shall see, \cref{result:ck-of-h-doesnt-matter} does not
hold in general any more.
So in the following we usually consider $\vDash_H$, which, as
mentioned earlier,
includes $\vDash$ as a special case with $H$ being the complete hypergraph.

Recall that for a formula $\varphi$ we denoted the set of facts that occur in it by $Facts(\varphi)$.

\begin{lemma}\label{lem:f}
For any $\varphi \in {\cal L}^+$ and $H$-compliant state $\state$ if $(V,M) \vDash_H \varphi$, then $Facts(\varphi) \cap V \neq \ES$.
\end{lemma}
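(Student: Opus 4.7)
The plan is to prove this by straightforward structural induction on $\varphi \in \mathcal{L}^+$, exploiting the fact that positive formulas have no negation.

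For the base case $\varphi = p$, the hypothesis $(V,M) \vDash_H p$ means $p \in V$ by definition, and since $Facts(p) = \{p\}$, we immediately get $Facts(\varphi) \cap V = \{p\} \neq \emptyset$.

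For the inductive cases, the connectives $\land$ and $\lor$ are immediate: if $(V,M) \vDash_H \varphi_1 \lor \varphi_2$ then one disjunct holds, giving by the induction hypothesis an element of $Facts(\varphi_i) \cap V \subseteq Facts(\varphi) \cap V$; conjunction is even easier since both conjuncts hold. For $\varphi = \ck G \psi$, I would use the observation that the indistinguishability relation $\sim_i$ is reflexive (since trivially $(V_i, M_i) = (V_i, M_i)$), hence so is $\sim_G$, and thus $(V,M) \sim_G (V,M)$. The $H$-compliance of $(V,M)$ is given by hypothesis, so the semantic clause for $\ck G$ forces $(V,M) \vDash_H \psi$. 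The induction hypothesis applied to $\psi$ then yields $Facts(\psi) \cap V \neq \emptyset$, and since $Facts(\ck G \psi) = Facts(\psi)$, we are done.

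There is essentially no obstacle here; the reason the claim holds boils down to the fact that positive formulas are preserved under adding facts and messages (cf.\ \cref{result:positive-keep-holding}) but cannot be satisfied when all atoms they mention are false — a property that fails precisely once negation enters the language. The only subtle point worth noting in the write-up is the appeal to reflexivity of $\sim_G$ to extract $\psi$ from $\ck G \psi$; everything else is purely syntactic bookkeeping of $Facts(\cdot)$.
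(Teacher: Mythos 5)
Your proof is correct and follows essentially the same route as the paper: structural induction on $\varphi$, where the only noteworthy case is $\ck G\psi$, handled by observing that $(V,M)\vDash_H \ck G\psi$ implies $(V,M)\vDash_H\psi$ (which is exactly your reflexivity-of-$\sim_G$ argument, using that \state itself is $H$-compliant). Your write-up merely spells out the routine base and Boolean cases that the paper leaves implicit.
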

\begin{proof}
We proceed by structural induction on $\varphi$.
  The only not completely obvious case is when
  $\varphi=\ck G\psi$. But $(V,M) \vDash_H \ck G\psi$ implies $(V,M) \vDash_H \psi$, so in this case the induction hypothesis
readily applies, as well.
\end{proof}

The following result is then a counterpart of \cref{result:knowledge-chain-phi-only-through-msg}.

\begin{theorem}\label{thm:group1}
For any $G \subseteq N$ with $|G| \geq 2$, $\varphi\in\mathcal{L}^+$, and $H$-compliant state $\state$,
\begin{mainclaim}
if $\state\vDash_H \ck G \varphi$, then there is $\msg{\cdot}{A}{p}\in M$ with $G \subseteq A$ and $p \in Facts(\varphi)$.
\end{mainclaim}
\end{theorem}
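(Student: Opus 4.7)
My plan is to prove the contrapositive: assume that every message $\msg{\cdot}{A}{p}\in M$ either satisfies $G\not\subseteq A$ or has $p\notin Facts(\varphi)$, and construct an $H$-compliant state $\state[']$ with $\state\sim_G\state[']$ and $\state['] \nvDash_H \varphi$.

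The candidate state I will use is $\state[']=(V\setminus Facts(\varphi),\, M\setminus I)$, where $I=\{m\in M : m=\msg{\cdot}{\cdot}{p}\text{ with }p\in Facts(\varphi)\}$. First I would verify that $\state[']$ is a valid $H$-compliant state: because $\leadsto$ preserves the fact component of a message, any explanation chain of a surviving message consists of messages about the same (non-$Facts(\varphi)$) fact and so lies entirely in $M\setminus I$; the inclusion $\Facts(M\setminus I)\subseteq V\setminus Facts(\varphi)$ is immediate from $\Facts(M)\subseteq V$; and $M\setminus I\subseteq M$ gives $H$-compliance. The contrapositive of \cref{lem:f}, applied with $V'\cap Facts(\varphi)=\emptyset$, then yields $\state['] \nvDash_H \varphi$.

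The main work is to exhibit a chain witnessing $\state\sim_G\state[']$. I would enumerate $I=\{m_1,\ldots,m_K\}$ in \emph{reverse} topological order of the $\leadsto$-DAG on $I$, and for each $m_k$ pick a witness $g_k\in G\setminus A(m_k)$ (which exists by the contrapositive hypothesis, since $m_k$'s fact lies in $Facts(\varphi)$). I would then build the chain by deleting $m_K, m_{K-1},\ldots,m_1$ in order, each removal being a $\sim_{g_k}$-transition: because $g_k\notin A(m_k)$ the deleted message never lay in $g_k$'s view, and the reverse topological order guarantees that $m_k$ is a sink of the current explanation DAG, so no surviving message's explanation can depend on it. After $K$ steps we reach $(V, M\setminus I)$; to finish, I would strip the facts of $Facts(\varphi)\cap V$ from $V$ one at a time, using for each $p$ a witness $g\in G\setminus\{i(p)\}$ (which exists because $|G|\geq 2$), so that the $\sim_g$-transition legally deletes $p$ from $V$: $p\notin\bits_g$ keeps $g$'s view intact, and no message about $p$ remains to violate $\Facts(\cdot)\subseteq V$.

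The main obstacle will be keeping every intermediate state valid throughout the chain of message deletions, i.e., ensuring that every surviving message retains an explanation. This is precisely what forces the reverse topological order: at each step the message to be removed must be a sink of the current $\leadsto$-DAG. A smaller subtlety is the fact-removal phase when $i(p)\in G$, which goes through because the hypothesis $|G|\geq 2$ always supplies an alternative member of $G$ to serve as witness.
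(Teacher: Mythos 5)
Your proposal is correct and is essentially the paper's own argument in different packaging: both establish the claim by exhibiting, under the negation of the conclusion, a $\sim_G$-chain of $H$-compliant states that strips away all material concerning $Facts(\varphi)$ — removing messages while always taking a $\leadsto$-maximal one so that survivors keep their explanations, with a witness $g\in G$ outside the removed message's audience, and then removing the facts themselves using $|G|\geq 2$ — and both conclude by applying \cref{lem:f} to a state whose valuation misses $Facts(\varphi)$ (the paper organizes this as an iterated three-case analysis that deletes the closure $cl(m)$ of a top message of $M_G$, or a fact together with all its messages, at each step rather than single sinks). The only blemish is an indexing slip: having enumerated $I$ in reverse topological order as $m_1,\dots,m_K$, deleting $m_K,\dots,m_1$ would remove sources first, but the invariant you state — that each removed message be a sink of the current $\leadsto$-DAG — is the correct one and is what your argument actually uses.
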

\begin{proof}
Note that the conclusion of the implication can be written in a more succinct way as
$Facts(\varphi) \cap \Facts(\bigcap_{i \in G} M_i) \neq \ES$.  
Suppose that $\state\vDash_H \ck G \varphi$ and $Facts(\varphi) \cap \Facts(\bigcap_{i \in G} M_i) = \ES$.  
Call a message a \oldbfe{$p$-message} if it is of the form $(\cdot, \cdot, p)$.
Abbreviate $\bigcup_{i \in G} M_i$ to $M_G$
(Note that this is different from $M_w$, where $w$ is a word,
which corresponds to an intersection.)
Three cases arise.
\II

\NI
\emph{Case 1}. For all $p \in Facts(\varphi)$ there is no $p$-message in $M_G$ and $Facts(\varphi) \cap V = \ES$.
Then by \cref{lem:f} $(V,M) \nvDash_H \varphi$.  
\II

\NI
\emph{Case 2}. For all $p \in Facts(\varphi)$ there is no $p$-message in $M_G$ and $Facts(\varphi) \cap V \neq \ES$.

Take some $p \in Facts(\varphi) \cap V$.
Since $|G| \geq 2$, there is $i \in G$ such that $p \not\in V_{i}$.
Remove from $V$ the fact $p$ and from $M$ all $p$-messages.  Denote
the outcome by $(V', M')$. By construction $(V',M')$ is an  $H$-compliant state and by
the assumption there is no $p$-message in $M_{i}$, so
$(V,M) \sim_i (V',M')$. 
\II

\NI
\emph{Case 3}. For some $p \in Facts(\varphi)$ there is a $p$-message in $M_G$.  

Given a set of messages $O \sse
M$, we denote by $top(O)$ the set of $p$-messages $m \in O$, where $p \in Facts(\varphi)$, such that
for no $m' \in O$ we have $m \neq m'$ and $m \leadsto^* m'$.  Further,
we define ($cl$ stands for the closure)
\[
cl(m) := \C{m' \in M \mid m \leadsto^{*} m'}.
\]

We assumed that the set of $p$-messages in $M_G$, where $p \in Facts(\varphi)$, is non-empty,
so the set $top(M_G)$ is non-empty and hence for some $i_1 \in G$ the set
$top(M_G) \cap M_{i_1}$ is non-empty. Choose some $m \in top(M_G) \cap M_{i_1}$.
Let $M' := M \setminus cl(m)$ and $V' := V$.  By the assumption 
$Facts(\varphi) \cap \Facts(\bigcap_{i \in G} M_i) = \ES$, so there is some $i \in G$ such that 
$i \not\in A$, where $m$ is of the form $(\cdot,A,p)$ for some $p \in Facts(\varphi)$.
By the construction $(V', M')$ is an $H$-compliant state
and $(V,M) \sim_{i} (V', M')$.  
\II

We now repeat the above case analysis with $(V',M')$ instead of $(V,M)$.
% If Case 1 arises, we terminate with $(V^1,M^1)$ such that $(V^1,M^1) \nvDash_H \varphi$.  
% If Case 2 arises, we by constructing $(V',M')$ such that
% $(V,M^1) \sim_{i} (V', M')$ for some $i \in G$.  If Case 2 arises, we
% construct some $H$-compliant state $(V, M^2)$ such that for some $i_2 \in G$ we have
% $(V,M^1) \sim_{i_2} (V, M^2)$.
%
Iterating this way we eventually end up in Case 1 since in Cases 2 and 3
always some fact or message is removed.  This way we obtain a word $w \in
G^*$ and an $H$-compliant state $(V'', M'')$ such that $(V,M) \sim_{w} (V'', M'')$
and $(V'', M'') \nvDash_H \varphi$.  By~\eqref{equ:K} this contradicts the assumption that
$\state\vDash_H \ck G \varphi$.  
\end{proof}

\begin{corollary} \label{cor:iff}
For any $G \subseteq N$ with $|G| \geq 2$, $p\in\bits$, and $H$-compliant state $\state$,
\[
\mbox{$\state\vDash_H \ck G p$ iff there is $\msg{\cdot}{A}{p}\in M$ with $G \subseteq A$.}
\]
\end{corollary}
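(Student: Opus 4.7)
The corollary is an immediate specialization of Theorem~\ref{thm:group1} in one direction, and a short chase through the indistinguishability relation in the other. The plan is as follows.

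For the ($\Rightarrow$) direction, I simply instantiate Theorem~\ref{thm:group1} with $\varphi = p$. Since $p \in \bits$ is an atom, $Facts(p) = \{p\}$, and the conclusion of the theorem reads: there exists $\msg{\cdot}{A}{p}\in M$ with $G \subseteq A$. No further work is needed.

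For the ($\Leftarrow$) direction, assume $m = \msg{\cdot}{A}{p} \in M$ with $G \subseteq A$, and take any $H$-compliant state $\state[']$ with $\state \sim_G \state[']$. By definition of $\sim_G$ there exist $i_1,\ldots,i_k \in G$ and $H$-compliant states $\state[^1],\ldots,\state[^k] = \state[']$ with $\state \sim_{i_1} \state[^1] \sim_{i_2} \cdots \sim_{i_k} \state[^k]$. I then propagate $m$ along this chain exactly as in the proof of $(ii) \Rightarrow (iii)$ of Lemma~\ref{result:knowledge-chain-fact-equiv-msg-equiv-ck}: since $i_1 \in G \subseteq A$, we have $m \in M_{i_1}$, so $m \in M^1_{i_1}$, hence $m \in M^1$; iterating, $m \in M^l$ for all $l$, and in particular $m \in M'$. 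Because $(V', M')$ is a state, $\Facts(M') \subseteq V'$, so $p \in V'$ and thus $\state['] \vDash_H p$. Since $\state[']$ was arbitrary, $\state \vDash_H \ck G p$.

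There is no real obstacle here: the forwarding framework does not affect the argument, because what is propagated along the $\sim_G$-chain is a single fixed message, and the invariant $\Facts(M') \subseteq V'$ for states continues to hold in the present, more general notion of state. The only point to keep in mind is that all intermediate states in the chain are required to be $H$-compliant (which is built into the semantics of $\ck G$ under $\vDash_H$), but this plays no role in the argument since we only need the message $m$, which is already in the $H$-compliant $M$, to remain present.
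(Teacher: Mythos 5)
Your proof is correct and follows essentially the same route as the paper, which obtains ($\Rightarrow$) directly from Theorem~\ref{thm:group1} with $\varphi=p$ and notes that ($\Leftarrow$) is proved exactly as the implication from message existence to common knowledge in Lemma~\ref{result:knowledge-chain-fact-equiv-msg-equiv-ck}, the argument you spell out. (Your closing remark is slightly off --- the semantics only requires the final state $\state[']$ to be $H$-compliant, not the intermediate states in the $\sim_G$-chain --- but, as you say yourself, this plays no role in the argument.)
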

\begin{proof}
$(\Ra)$ is a direct consequence of \cref{thm:group1}.\\
\NI ($\La$) 
The proof is analogous to the one of the implication
$(\ref{result:knowledge-chain-fact-equiv-msg-equiv-ck:msgs})\Rightarrow
  (\ref{result:knowledge-chain-fact-equiv-msg-equiv-ck:ck})$ of
\cref{result:knowledge-chain-fact-equiv-msg-equiv-ck} and is omitted.
% Suppose that some $m \in M$ exists of the form 
% $\msg{\cdot}{A}{p}\in M$ with $G \subseteq A$.
% To prove $\state \vDash_H \ck G p$ we need to show that for
% all $H$-compliant states $\state[']$ if $\state \sim_{G} \state[']$, then $\state[']
% \vDash_H p$. Now, $\state \sim_{G} \state[']$ means that for a sequence
% $i_1, \LL, i_k$ of players from $G$ and some $H$-compliant states $\state[^1], \LL,
% \state[^k]$ we have $\state \sim_{i_1} \state[^1] \sim_{i_2} \LL
% \sim_{i_k} \state[^k]$, where $\state['] = \state[^k]$.  But $i_1 \in
% G$ and $G \sse A$, so $m \in M_{i_1}$, and consequently $m \in
% M^{1}_{i_1}$. Also $i_2 \in G$ and $G \sse A$, so $m \in M^{1}_{i_2}$,
% and consequently $m \in M^{2}_{i_2}$.  Continuing this way we conclude
% that $m \in M^{k}_{i_k}$, that is $m \in M'_{i_k}$. 
%
% Hence, $m \in M'$. Moreover, $\Facts(M') \sse V'$, since $(V', M')$ is a
% state. Consequently $p \in V'$, which implies $\state['] \vDash_H p$.
\end{proof}

Here is the counterpart of the above result for the case of one
player.  It states that in any state a player knows a fact iff either
he knows it at the outset or he has learned it through a message he
received.

\begin{lemma}
  \label{lem:equiv_i}
For any $i\in N$, $p\in\bits$, and $H$-compliant state $\state$,
\[
\mbox{$\state\vDash_H \knows i p$ iff $p \in V_i \cup\Facts(M_i)$.}
\]
\end{lemma}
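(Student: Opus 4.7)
The plan is to prove both directions using the explanation-chain definition of a state from \cref{sec:forwarding}.

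For the $(\Leftarrow)$ direction, I would split on the two disjuncts. Suppose first that $p \in V_i$. Then for any $H$-compliant state $\state[']$ with $\state \sim_i \state[']$, we have $V_i = V_i'$, hence $p \in V_i' \subseteq V'$, so $\state['] \vDash_H p$. Suppose instead that $p \in \Facts(M_i)$, so some message $\msg{j}{A}{p} \in M_i$. Again, for any $H$-compliant $\state[']$ with $\state \sim_i \state[']$, we have $M_i = M_i' \subseteq M'$, so $\msg{j}{A}{p} \in M'$. Since $\state[']$ is a state, this message must admit an explanation $(j_1,B_1,p), \ldots, (j_k,B_k,p)$ in $M'$, and by definition $p \in V'_{j_1} \subseteq V'$, so again $\state['] \vDash_H p$. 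Either way, $\state \vDash_H \knows i p$.

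For the $(\Rightarrow)$ direction, I would argue by contrapositive: assume $p \notin V_i \cup \Facts(M_i)$, and construct an $H$-compliant state $\state[']$ with $\state \sim_i \state[']$ and $\state['] \nvDash_H p$. The natural candidate is to strip $p$ and all $p$-messages: let $V' := V \setminus \{p\}$ and $M' := \{m \in M \mid m \text{ is not a $p$-message}\}$. Then $\state['] \nvDash_H p$ is immediate. Indistinguishability $\state \sim_i \state[']$ follows because $p \notin V_i$ gives $V'_i = V_i$, and $p \notin \Facts(M_i)$ means $i$ received no $p$-message, so $M'_i = M_i$. Finally, $H$-compliance of $\state[']$ is inherited because $M' \subseteq M$.

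The one step that requires a little care, and is the main obstacle, is checking that $\state[']$ is still a state, i.e., that every surviving message has an explanation in $\state[']$. Given any $\msg{\cdot}{\cdot}{q} \in M'$, we have $q \neq p$. Take its explanation $(j_1,B_1,q), \ldots, (j_k,B_k,q)$ in the original $\state$; since each of these messages is a $q$-message with $q \neq p$, none of them were removed, so the whole chain lies in $M'$. The initial condition $q \in V_{j_1}$ still holds in $V'$ because $q \neq p$. Hence the same chain serves as an explanation in $\state[']$, establishing that $\state[']$ is a state. This completes the contrapositive and thus the theorem.
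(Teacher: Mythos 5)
Your proposal is correct and follows essentially the same route as the paper: the ($\Leftarrow$) direction uses $V_i=V'_i$, $M_i=M'_i$ and $\Facts(M')\subseteq V'$, and the ($\Rightarrow$) direction removes $p$ and all $p$-messages to obtain an indistinguishable $H$-compliant state falsifying $p$. You merely spell out the verification (which the paper leaves as ``by construction'') that the pruned pair is still a state, which is a fine addition.
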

\begin{proof}
$(\Ra)$ 
Suppose that $\state\vDash_H \knows i p$ and $p \not\in V_i \cup\Facts(M_i)$.
Remove from $V$ the fact $p$ and from $M$ all messages of the form $(\cdot, \cdot, p)$.
Denote the outcome by $(V',M')$. By construction $(V',M')$ is an $H$-compliant state and by the assumption
$(V,M) \sim_i (V',M')$. So $(V',M') \vDash_H p$, which is a contradiction.
\II

\NI
$(\La)$ 
Consider an  $H$-compliant state $\state[']$ such that $\state \sim_i \state[']$.
Then $V_i = V'_i$ and $M_i = M'_i$. So $V_i \sse V'$ and 
$\Facts(M_i) \sse \Facts(M') \sse V'$, where the final inclusion follows by 
the fact that $\state[']$ is a state. So $p \in V'$ and consequently $\state['] \vDash_H p$, as desired.
\end{proof}

\begin{corollary}
\label{result:ck-of-h-doesnt-matter-for-facts}
For any $i\in N$, $p\in\bits$, and $H$-compliant state $\state$,
\[
\mbox{$\state\vDash_H \knows i p$ iff $\state\vDash \knows i p$.}
\]
\end{corollary}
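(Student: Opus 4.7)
The plan is to reduce the corollary to two invocations of \cref{lem:equiv_i}. That lemma characterises $\state\vDash_H\knows i p$ purely in terms of the data of the state, namely $p \in V_i \cup \Facts(M_i)$, a condition that makes no reference whatsoever to $H$. Since $\vDash$ is by definition just $\vDash_H$ instantiated at the complete hypergraph $\powerset{N} \setminus \{\emptyset\}$, applying the same lemma at that hypergraph should give the matching characterisation for $\state\vDash\knows i p$, and the two equivalences chain to yield the corollary.

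Concretely, I would first note that any $H$-compliant state $\state$ is automatically compliant with the complete hypergraph, since every non-empty subset of $N$ is a hyperarc there; so both sides of the biconditional are statements to which \cref{lem:equiv_i} applies. Then I would write:
\begin{align*}
\state\vDash_H \knows i p
&\iff p \in V_i \cup \Facts(M_i)
&& \text{(by \cref{lem:equiv_i} with hypergraph $H$)} \\
&\iff \state\vDash \knows i p
&& \text{(by \cref{lem:equiv_i} with the complete hypergraph).}
\end{align*}

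There is essentially no obstacle to overcome: the only point worth checking is that the quantification over indistinguishable states in the two semantics really does collapse to the same data-level condition, and this is exactly the content of \cref{lem:equiv_i}. In contrast to \cref{result:ck-of-h-doesnt-matter} in the telling regime, here the result is restricted to the atomic, single-player case precisely because for epistemic formulas the forwarding semantics genuinely depends on $H$ (as witnessed by \cref{ex:semantics-different}); but atomic knowledge of one agent is insensitive to $H$ because it is already determined by $i$'s local view $(V_i, M_i)$ together with the state axiom that received messages carry true facts.
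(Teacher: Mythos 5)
Your proof is correct and is essentially the paper's own argument: the paper likewise derives the corollary from \cref{lem:equiv_i} together with the observation that $\vDash$ is just $\vDash_H$ for the complete hypergraph, so the $H$-independent characterisation $p \in V_i \cup \Facts(M_i)$ applies on both sides. Nothing further is needed.
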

\begin{proof}
  This follows from \cref{lem:equiv_i} and the fact that $\vDash$ is a special case of $\vDash_H$.
\end{proof}

For further analysis we need an auxiliary concept. Suppose that $(V,M) \sse (V',
M')$, where $(V',M')$ is a state.  In general, $(V,M)$
does not need to be a state but we
can complete it to a state $L(V,M)$ such that $L(V,M)
\sse (V',M')$. Indeed, it suffices for each message $m$ in $M$ to add
to $M$ messages forming an explanation of $m$ in $M'$ and then add
to $V$ the facts used in these added messages.  More precisely, let
$M''$ be a smallest set such that $M \sse M'' \sse M'$ and $(V \cup
Facts(M''), M'')$ is a state. In general, this does not define
a unique state, since each message in $M'$ can have
multiple explanations. However, the states are finite, so we can
always choose $(V \cup Facts(M''), M'')$ in a unique way, for example,
by associating with each state a unique natural number.

From now on we assume that given an inclusion $(V,M) \sse (V', M')$
the state $L(V,M)$ is uniquely defined.  Note that if
\state['] is $H$-compliant, then so are \state and $L\state$.

The following observation will be useful.
\begin{fact}
  \label{fact:L}
For any $i\in N$ and $H$-compliant states $\state ,\state[']$ with $(V_i, M_i) \sse \state[']$,
\[
      \state \sim_i L(V_i, M_i).
\]
\end{fact}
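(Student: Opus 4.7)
The plan is to unfold $L(V_i,M_i)$ into its constituents and verify the two equalities defining $\sim_i$ componentwise. Write $L(V_i,M_i)=(W,M'')$ with $W=V_i\cup\Facts(M'')$ and $M''$ the minimal set such that $M_i\subseteq M''\subseteq M'$ and $(W,M'')$ is a state. The target then reduces to showing $V_i=W\cap\bits_i$ and $M_i=\{m\in M''\mid i\in A,\ m=(\cdot,A,\cdot)\}$.

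One direction is immediate in each case: $V_i\subseteq W\cap\bits_i$ follows from $V_i\subseteq W$ and $V_i\subseteq\bits_i$; and $M_i\subseteq M''$ holds by construction of $L$, while every message in $M_i$ already has $i$ in its group. For the reverse inclusion on messages, any $m\in M''$ with $i\in A$ lies in the $i$-restriction of $M'$ via $M''\subseteq M'$, and the hypothesis $(V_i,M_i)\sse\state[']$ together with the fact that $M_i$ is already the full $i$-restriction of $M$ lets me identify the $i$-restriction of $M'$ with $M_i$, placing $m\in M_i$.

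For the reverse inclusion on the valuation, suppose $p\in W\cap\bits_i$ and $p\notin V_i$. Then $p\in\Facts(M'')$, so some $p$-message lies in $M''$; since $(W,M'')$ is a state, this message admits an explanation chain inside $M''$. By pairwise disjointness of the sets $\bits_k$, the root sender of the chain must be $i$, so the chain starts with a message of the form $(i,B_1,p)\in M''$ with $i\in B_1$. By the message inclusion already established, this message lies in $M_i\subseteq M$, and then $p\in V_i$ follows from the fact that $\state$ is a state with $i$ as sender, contradicting $p\notin V_i$.

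The step I expect to be the main obstacle is the identification of the $i$-restriction of $M'$ with $M_i$ (and, parallel to it, of $V'\cap\bits_i$ with $V_i$) that underpins both reverse inclusions. The literal hypothesis $(V_i,M_i)\sse\state[']$ yields only the inclusions $M_i\subseteq M'_i$ and $V_i\subseteq V'\cap\bits_i$, so the argument must either read the hypothesis as shorthand for $\state\sim_i\state[']$ or exploit the minimality of $M''$ to preclude any $i$-visible message from $M'$ that is absent from $M$ from being forced into $M''$ along an explanation chain. Once either avenue is secured, the remainder of the proof is a routine verification of the two displayed equalities.
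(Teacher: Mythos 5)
Your overall route is the same as the paper's -- verify $\sim_i$ componentwise by showing $V''_i=V_i$ and $M''_i=M_i$ for $\state['']=L(V_i,M_i)$ -- but your proposal stops exactly at the step that carries all the content, and of the two rescue avenues you offer for it, the minimality one does not work. Under the literal hypothesis $(V_i,M_i)\sse\state[']$ the claimed equality can genuinely fail: take $p\in\bits_k$, $H\supseteq\{\{j,k\},\{i,j\},\{i,j,k\}\}$, $\state=(\{p\},\{\msg{k}{\{j,k\}}{p},\msg{j}{\{i,j\}}{p}\})$ and $\state[']=(\{p\},\{\msg{k}{\{i,j,k\}}{p},\msg{j}{\{i,j\}}{p}\})$. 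Both are $H$-compliant states and $(V_i,M_i)=(\emptyset,\{\msg{j}{\{i,j\}}{p}\})\sse\state[']$, yet the only $M''$ with $M_i\sse M''\sse M'$ for which $(V_i\cup\Facts(M''),M'')$ is a state is $M'$ itself, because the sole explanation of $\msg{j}{\{i,j\}}{p}$ inside $M'$ must pass through $\msg{k}{\{i,j,k\}}{p}$, a message involving $i$ that is absent from $M$. So $L(V_i,M_i)=\state[']$, its set of messages involving $i$ strictly contains $M_i$, and $\state\not\sim_i L(V_i,M_i)$. Hence minimality of $M''$ cannot preclude an $i$-visible message of $M'\setminus M$ from being forced into the completion, and no amount of routine verification will close the proof along that avenue.

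The viable avenue is your first one, and it should be committed to explicitly: the hypothesis has to be understood as giving $(V'_i,M'_i)=(V_i,M_i)$, i.e.\ $\state\sim_i\state[']$, which is exactly what holds in the paper's only use of the fact (Case~1 of \cref{lem:properties}, where $\state[']=\state$). Under that reading both reverse inclusions are immediate: $M''_i\sse M'_i=M_i$ since $M''\sse M'$, and $\Facts(M'')\cap\bits_i\sse\Facts(M')\cap\bits_i\sse V'_i=V_i$ since $\state[']$ is a state, so your explanation-chain argument for the valuation component is not even needed (as written it anyway presupposes $M''_i\sse M_i\sse M$, i.e.\ the contested identification). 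For comparison, the paper's own one-line proof glosses over the same point -- its assertion that $L(V_i,M_i)$ is obtained by adding only messages not involving $i$ and facts outside $\bits_i$ is precisely the claim at issue, and is only justified under the restricted reading. So you located the right obstacle, but a complete proof requires resolving it by strengthening (or correctly interpreting) the hypothesis, not by appealing to minimality.
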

\begin{proof}
All messages in $(V_i, M_i)$ involve player~$i$, so
the $H$-compliant state $\state[''] = L(V_i, M_i)$ is realized by adding to $(V_i,
M_i)$ only some messages that do not involve player~$i$ and some
facts from outside of $\bits_i$.  Consequently $M_i = M''_i$ and
$V_i = V''_i$, that is $\state \sim_i \state['']$.
\end{proof}

Next, the following property of the semantics will be needed.

\begin{lemma}
\label{lem:properties}
For any $H$-compliant state $\state$, $G \sse N$,
% \begin{enumerate}[(i)]
% \item\label{lem:properties:k-distr-over-p} 
  and facts $p_1, \LL, p_k$,
  \[
  \mbox{$\state \vDash_H \ck G(\bigvee_{j=1}^{k} p_j)$ iff $\state \vDash_H \bigvee_{j=1}^{k} \ck G p_j$.}
  \]
% \item \label{lem:properties:conjunction} for all formulas $\varphi_1, \LL, \varphi_k \in \mathcal{L}$,
%    \[
%    \mbox{$(V,M) \vDash_H \ck G(\bigwedge_{j=1}^{k} \varphi_j)$ iff $(V,M) \vDash_H \bigwedge_{j=1}^{k} \ck G \varphi_j$},
%    \]
% \item\label{lem:properties:k-not-inspecting} for all propositional formulas $\varphi_1, \varphi_2$ such that
%   $\varphi_1 \lra \varphi_2$ is a tautology,
%   \[
%   \mbox{$\state \vDash \ck G \varphi_1$ iff $\state \vDash \ck G \varphi_2$},
%   \]
% \end{enumerate}
\end{lemma}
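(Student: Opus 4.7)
The plan is to handle the two implications separately. The reverse direction $(\Leftarrow)$ is immediate from the semantics: if $\state\vDash_H \ck G p_j$ for some $j$, then in every $H$-compliant state $\sim_G$-related to $\state$ the disjunct $p_j$ holds, and hence so does $\bigvee_j p_j$, giving $\state\vDash_H \ck G(\bigvee_j p_j)$.

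For the forward direction $(\Rightarrow)$ I would split on whether $|G|\geq 2$ or $G=\{i\}$. The first case is essentially a one-liner from the results just established: applying \cref{thm:group1} to $\varphi:=\bigvee_{j=1}^{k}p_j$, whose set of facts is $\{p_1,\ldots,p_k\}$, yields a message $\msg{\cdot}{A}{p_j}\in M$ with $G\sse A$ for some $j$; then the $(\Leftarrow)$ direction of \cref{cor:iff} immediately gives $\state\vDash_H \ck G p_j$, so $\state\vDash_H\bigvee_j \ck G p_j$.

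The singleton case $G=\{i\}$ is not covered by \cref{thm:group1} or \cref{cor:iff}, and I expect it to be the main obstacle. My plan is to combine \cref{lem:equiv_i} with the completion operator $L$. Assume for contradiction that $\state\vDash_H \knows i(\bigvee_j p_j)$ but that for no $j$ does $\state\vDash_H \knows i p_j$; then by \cref{lem:equiv_i} none of the $p_j$ lies in $V_i\cup\Facts(M_i)$. Form $\state['']:=L(V_i,M_i)$ using $\state$ itself as the ambient $H$-compliant superstate; by \cref{fact:L} this yields $\state\sim_i\state['']$, so the premise forces some $p_j\in V''$. The crucial observation will be that every explanatory chain for a message $(\cdot,\cdot,p)$ consists only of messages about that same fact $p$, so the minimal completion of $(V_i,M_i)$ introduces no facts outside $\Facts(M_i)$; hence $V''=V_i\cup\Facts(M_i)$, contradicting the standing assumption. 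Once this shape-of-explanations point is in place, both cases close cleanly.
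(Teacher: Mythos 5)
Your proposal is correct and follows essentially the same route as the paper's own proof: the $(\Leftarrow)$ direction from the semantics, the case $\abs G\geq2$ via \cref{thm:group1} applied to $\bigvee_j p_j$ together with \cref{cor:iff}, and the singleton case via \cref{lem:equiv_i}, the completion $L(V_i,M_i)$ and \cref{fact:L}. Your ``crucial observation'' that explanations only involve messages about the same fact, so that $V''\sse V_i\cup\Facts(M_i)$, is exactly the point the paper uses (somewhat tersely) in its Case~1.
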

\begin{proof}
To deal with ($\Rightarrow$) we consider two cases.
\II

\NI
\emph{Case 1}. $|G| = 1$, say $G = \C{i}$.

Suppose that $\state\nvDash_H \bigvee_{j=1}^{k} \knows i p_j$.
Then for $j \in \C{1, \LL, k}$ we have $\state\nvDash_H \knows i p_j$
and thus by
\cref{lem:equiv_i} $p_j \not\in V_i \cup \Facts(M_i)$.
Let
\[
\state[''] := L(V_i, M_i)
\]
be the $H$-compliant state defined w.r.t.~the inclusion $(V_i, M_i) \sse \state$.
This state is realized by adding to $M_i$ some messages from $M$ and to $V_i$ some
facts from $\Facts(M_i)$. So $V'' \sse V_i \cup \Facts(M_i)$ and consequently
for $j \in \C{1, \LL, k}$ we have $p_j \not\in V''$. Hence, 
\[
    \state['']\nvDash_H \textstyle\bigvee_{j=1}^{k} p_j.
\]
Moreover, by \cref{fact:L} we have $\state\sim_i \state['']$, so
$\state\nvDash_H \knows i(\bigvee_{j=1}^{k} p_j)$. 
\II

\NI
\emph{Case 2}. $|G| \geq 2$.

By \cref{lem:f} for some $j \in \C{1, \LL, k}$ there is $\msg{\cdot}{A}{p_j}\in M$ with $G \subseteq A$.
So, by \cref{cor:iff}, $\state\vDash_H \ck G p_j$, and thus $\state \vDash_H \bigvee_{j=1}^{k} \ck G p_j$.
\II

The ($\Leftarrow$) implication holds directly by the definition of the semantics.
\end{proof}

We can now resume our comparison with the results of the previous section. To start with, 
the following result is a counterpart of \cref{result:positive-keep-holding}.
% It intuitively says that positive formulas which
% are true in some legal state remain true when
% more facts are true or more communication has taken place.  

\begin{lemma}
  \label{lem:mono}
  For any $\varphi\in\mathcal{L}^+$ and $H$-compliant states $(V,M)$ and $(V',M')$
  with $(V',M')\subseteq\state $,
\[
    \textup{if } (V', M')\vDash_H\varphi, \textup{ then } \state \vDash_H\varphi.
\]
\end{lemma}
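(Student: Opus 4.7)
The plan is to argue by structural induction on $\varphi \in \mathcal{L}^+$, the only interesting case being $\varphi = \ck G \psi$. The base case $\varphi = p$ is immediate since $V' \subseteq V$ forces $p \in V$ whenever $p \in V'$, and the conjunction and disjunction cases follow directly from the induction hypothesis applied to the respective subformulas.

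For the common-knowledge case, I would start by unpacking the semantics. Suppose $\state['] \vDash_H \ck G \psi$, and fix an arbitrary $H$-compliant $\state['']$ with $\state \sim_G \state['']$; the goal is $\state[''] \vDash_H \psi$. By~\eqref{equ:K}, the indistinguishability witness is a word $w = i_1 \LL i_k \in G^*$ together with $H$-compliant states $\state[^0] = \state, \state[^1], \LL, \state[^k] = \state['']$ satisfying $\state[^{j-1}] \sim_{i_j} \state[^j]$ for each $j$.

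The key step is to build a parallel chain starting from $\state[']$ that remains contained, stepwise, inside the given one. Setting $(\tilde V^0, \tilde M^0) := \state[']$, I would define inductively
\[
(\tilde V^j, \tilde M^j) := L(\tilde V^{j-1}_{i_j}, \tilde M^{j-1}_{i_j}),
\]
with the completion $L$ taken with respect to the inclusion into $\state[^j]$. Maintaining the invariant $(\tilde V^{j-1}, \tilde M^{j-1}) \subseteq \state[^{j-1}]$, the chain of inclusions $(\tilde V^{j-1}_{i_j}, \tilde M^{j-1}_{i_j}) \subseteq (V^{j-1}_{i_j}, M^{j-1}_{i_j}) = (V^j_{i_j}, M^j_{i_j}) \subseteq \state[^j]$ ensures this completion is well-defined, $H$-compliant, and contained in $\state[^j]$. \cref{fact:L} then yields $(\tilde V^{j-1}, \tilde M^{j-1}) \sim_{i_j} (\tilde V^j, \tilde M^j)$, propagating the invariant.

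Chaining these indistinguishabilities gives $\state['] \sim_G (\tilde V^k, \tilde M^k)$, so the hypothesis $\state['] \vDash_H \ck G \psi$ delivers $(\tilde V^k, \tilde M^k) \vDash_H \psi$. Since $(\tilde V^k, \tilde M^k) \subseteq \state['']$ and both are $H$-compliant, applying the induction hypothesis to the strict subformula $\psi$ yields $\state[''] \vDash_H \psi$, as required. I expect the main obstacle to be precisely this parallel-chain construction: one must simultaneously preserve $H$-compliance, maintain containment in $\state[^j]$, and match the $\sim_{i_j}$ step at each stage, which is exactly what the completion operator $L$ and \cref{fact:L} were crafted to enable.
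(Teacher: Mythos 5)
Your proof is correct and takes the same route as the paper: the paper's own proof of this lemma is simply ``by structural induction on $\varphi$'', and your worked-out $\ck G$ case --- building the parallel chain via the completion operator $L$ and \cref{fact:L}, exactly the machinery the paper deploys in \cref{lem:properties} --- correctly supplies the details the paper omits. The only implicit step, namely taking the $\sim_G$-witnessing chain to run through $H$-compliant states, is the same convention the paper itself relies on when it invokes~\eqref{equ:K} under $\vDash_H$.
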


\begin{proof} By structural induction on $\varphi$.
% We proceed by structural induction on $\varphi$.
%   The only not completely obvious case is when
%   $\varphi=\ck G\psi$ for some $G \sse N$ and $\psi\in\mathcal{L}^+$.
% By~\eqref{equ:K} it suffices to limit oneself to the case when $G$ is a singleton set, so when
% $\varphi=\knows i\psi$ for some $i \in N$.
% 
% Take an $H$-compliant state $\state['']$ such that $\state \sim_i \state['']$. 
% We have $(V_i, M_i) = (V''_i, M''_i) \sse \state['']$.
% Hence, by the assumption,
% $
% (V'_i, M'_i)  \sse  \state[''].
% $
% Let $L(V'_i, M'_i)$ be the $H$-compliant state defined w.r.t.~the above inclusion. 
% We have $L(V'_i, M'_i) \sse \state['']$
% and by \cref{fact:L} $\state['] \sim_i L(V'_i, M'_i)$.
% 
% Assume now that $\state[']\vDash_H \knows i\psi$. Then
% $L(V'_i, M'_i) \vDash_H\psi$.  The induction hypothesis thus
% yields $\state['']\vDash_H\psi$. Consequently $\state \vDash_H \knows i\psi$, as
% desired.
\end{proof}

In \cref{sec:telling} we used this result to
establish \cref{result:ck-of-h-doesnt-matter}.  However, in the
current setting the counterpart of \cref{result:ck-of-h-doesnt-matter} does not hold.

\begin{example}
  \label{ex:ck-of-h-does-matter}
  Consider players $N = \C{i,j,k,l}$ and a graph $H$
  with the edges $\{l,k\}, \{k,j\}, \{j,i\}$,
%  (so with the point $n$ isolated),
  see \cref{fig:line}. Suppose that
  $V = \C{p}$, where $p \in \bits_l$, and $M = \C{(l,\C{l,k},p), (k,\C{k,j},p), (j,\C{j,i},p)}$.
  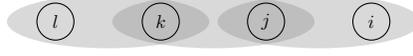
\begin{figure}
    \centering
    \beginpgfgraphicnamed{graph_line}
    \begin{tikzpicture}[scale=.7,transform shape]
      \foreach \x/\n in {2/l,4/k,6/j,8/i}
      \node[draw,commgraphnode] (\n) at (\x,0) {$\n$};
      \begin{pgfonlayer}{background}
      \foreach \n/\nn in {l/k,k/j,j/i}
      \node[commgraphhyperarc,fit=(\n)(\nn)] {};
      \end{pgfonlayer}
    \end{tikzpicture}
    \endpgfgraphicnamed
    \caption{Knowledge of~$H$ matters even for positive formulas when forwarding is allowed.}
    \label{fig:line}
  \end{figure}
  Then
  \[
  \state \nvDash \knows i \knows k p,
  \]
  since player~$i$ does not know through which source player~$j$ learned~$p$. However, 
  \[
  \state \vDash_H \knows i \knows k p,
  \]
  since when the underlying graph is commonly known,
  player~$i$ knows that player~$j$ learned $p$ from player~$k$.
\end{example}

Still, a limited counterpart of \cref{result:ck-of-h-doesnt-matter} does hold.
Let $\mathcal{L}^{+}_{K}$ be the sublanguage of $\mathcal{L}^{+}$ in which
the knowledge operators $\ck G$ are not allowed to be nested. So if $\ck G
\varphi \in \mathcal{L}^{+}_{K}$, then $\varphi$ is a propositional
formula that does not use negation.

\begin{theorem}
  \label{thm:+K}
  For any $H$-compliant state $\state$ and $\varphi\in\mathcal{L}^+_{K}$,
\[
\mbox{$\state\vDash\varphi$ iff $\state\vDash_H\varphi$.}
\]
\end{theorem}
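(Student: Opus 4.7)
I would proceed by structural induction on $\varphi \in \mathcal{L}^+_K$. Atoms and the Boolean cases are immediate (atoms have $H$-independent truth, and conjunction and disjunction are handled by the induction hypothesis). The sole substantive case is $\varphi = \ck G \psi$ with $\psi$ a positive propositional formula, for which the direction $\state \vDash \varphi \Rightarrow \state \vDash_H \varphi$ is trivial since $\vDash$ ranges over a superset of the states considered by $\vDash_H$.

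For the converse, note that $\state[''] \vDash \psi$ amounts to $V''$ satisfying $\psi$ as a valuation (since $\psi$ is propositional), and positivity of $\psi$ yields monotonicity. It therefore suffices to find, for each state $\state[']$ with $\state \sim_G \state[']$, an $H$-compliant $\state[''] $ with $\state \sim_G \state[''] $ and $V'' \subseteq V'$: the hypothesis $\state \vDash_H \ck G \psi$ then forces $V'' \models \psi$, whence $V' \models \psi$ by monotonicity, and so $\state['] \vDash \psi$. For $G = \{i\}$, the choice $\state[''] := L(V_i, M_i)$ via the inclusion $(V_i, M_i) \subseteq \state$ works: it is $H$-compliant, $\state \sim_i \state[''] $ by \cref{fact:L}, and $V'' = V_i \cup \Facts(M_i) \subseteq V'$ since $\state \sim_i \state[']$ gives $V_i = V'_i \subseteq V'$ and $\Facts(M_i) = \Facts(M'_i) \subseteq \Facts(M') \subseteq V'$.

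For $|G| \geq 2$ let $\Phi := \{p : \exists\, \msg{\cdot}{A}{p} \in M,\ G \subseteq A\}$; by \cref{cor:iff} this coincides with $\{p : \state \vDash_H \ck G p\}$. Every $G$-covering message is fixed by each $\sim_i$ with $i \in G$, and hence preserved along any $\sim_G$-chain, so $V' \supseteq \Phi$ for every $\state['] \sim_G \state$. Setting $M^* := \{m \in M : \Facts(m) \subseteq \Phi\}$, the pair $\state[^*] := (\Phi, M^*)$ is an $H$-compliant state, since $M^*$ is closed under explanations in $M$ (every explanation of a $p$-message consists of $p$-messages, all of which lie in $M^*$ when $p \in \Phi$). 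I would then exhibit an $H$-compliant $\sim_G$-chain from $\state$ to $\state[^*]$ by cycling through an enumeration $g_1,\dots,g_{|G|}$ of $G$: at step $\sim_{g_l}$ one deletes every non-$M^*$ message $m$ with $g_l \notin A_m$, every non-$\Phi$ fact owned by a player distinct from $g_l$, and any further message whose explanation is thereby broken, iterating the cycle to a fixed point. Because each non-$G$-covering message has some $g_l \notin A_m$ and (using $|G| \geq 2$) each non-$\Phi$ fact has an owner distinct from some $g_l$, the procedure terminates exactly at $(\Phi, M^*)$; and since we only ever delete from the $H$-compliant $\state$, every intermediate state is $H$-compliant. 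Applying the hypothesis at $\state[^*]$ yields $\Phi \models \psi$, and combined with $V' \supseteq \Phi$ and monotonicity this gives $\state['] \vDash \psi$.

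The principal obstacle is this reachability argument $\state \sim_G \state[^*]$: the cascading deletions must be scheduled so that no kept message loses its explanation. The key observation that makes the scheme work is that $M^*$ is closed under explanations, so no cascade originating from a non-$M^*$ deletion can ever propagate into $M^*$; consequently, iterating the cycle through $G$ produces only valid $H$-compliant intermediate states and terminates exactly at $(\Phi, M^*)$.
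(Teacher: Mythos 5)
Your overall plan is a genuinely different route from the paper's, and much of it is sound: the singleton case via $L(V_i,M_i)$ and \cref{fact:L} is correct (it is essentially Case~1 of \cref{lem:properties}), and for $\abs G\geq 2$ the reduction ``find an $H$-compliant $(\Phi,M^*)$ with $\state\sim_G(\Phi,M^*)$, then use $\Phi\subseteq V'$ and monotonicity of positive propositional formulas'' would indeed yield the theorem. The genuine gap is exactly where you locate the principal obstacle: the claimed chain $\state\sim_G(\Phi,M^*)$, and your deletion schedule does not establish it. Concretely, take $G=\{g_1,g_2\}$, players $j,k\notin G$, $q\in\bits_k$, $V=\{q\}$ and $M=\{\msg{k}{\{k,j\}}{q},\msg{j}{\{j,g_1\}}{q}\}$, so that $q\notin\Phi$ and both messages are non-$M^*$. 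With the enumeration $g_1,g_2$, your first step deletes $\msg{k}{\{k,j\}}{q}$ (its hyperarc misses $g_1$) and the fact $q$ (owned by $k\neq g_1$); this destroys every explanation of $\msg{j}{\{j,g_1\}}{q}$, and the cascade clause must then delete a message received by $g_1$ --- so either $M_{g_1}$ changes and the step is not a $\sim_{g_1}$ step, or the message is kept and the intermediate pair is not a state. Your key observation (cascades never enter $M^*$) does not help here, because the trouble is cascades entering non-$M^*$ messages that \emph{involve the acting player}: these must survive the current step even though they are slated for deletion later. The unconditional fact-deletion clause has the same defect whenever a non-$\Phi$ fact is removed while messages about it received by $g_l$ are still present.

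The damage is repairable but requires a different schedule, essentially the one-at-a-time peeling the paper uses in \cref{thm:group1}: fix a linearization of the messages in which each message is preceded by one of its explanations, repeatedly delete the \emph{last} remaining non-$M^*$ message (some $g\in G$ lies outside its hyperarc since no message about a non-$\Phi$ fact covers $G$, and no surviving message needs it, because explanations stay within one fact and all messages about $\Phi$-facts are explained inside $M^*$), and delete a non-$\Phi$ fact only after all messages about it are gone, via some $g\in G$ not owning it (here $\abs G\geq 2$ is used). The paper itself avoids any new deletion argument altogether: it puts $\psi$ into conjunctive normal form and applies \cref{lem:properties} under both $\vDash$ and $\vDash_H$, reducing the question to $\ck G p$ for atoms $p$, which is semantics-independent by \cref{result:ck-of-h-doesnt-matter-for-facts} (singleton $G$) and \cref{cor:iff} ($\abs G\geq 2$); you could have closed your $\abs G\geq 2$ case the same way instead of re-proving reachability.
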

\begin{proof} 
  We proceed by structural induction on $\varphi$.  The only
  non-trivial case is when $\varphi=\ck G\psi$ for some $i\in N$ and
  $\psi$ is a propositional formula that does not use negation.

Let $\bigwedge_{j=1}^{k} \bigvee_{l=1}^{m_j} p_{j,l}$
be the conjunctive normal form of $\psi$. So each $p_{j,l}$ is a fact.
By \cref{lem:properties} and the definition of semantics
we have both
\begin{align*}
  \state\vDash \ck G \psi&\iff\state\vDash\textstyle\bigwedge_{j=1}^{k} \bigvee_{l=1}^{m_j} \ck G p_{j,l}\\
  \intertext{and}
  \state\vDash_H \ck G \psi&\iff\state\vDash_H\textstyle\bigwedge_{j=1}^{k} \bigvee_{l=1}^{m_j} \ck G p_{j,l}\enspace.\\
  \intertext{But by \cref{result:ck-of-h-doesnt-matter-for-facts}, for all $j \in \C{1, \LL, k}$ and $l \in \C{1, \LL, m_j}$ we have}
  \state\vDash \ck G p_{j,l}&\iff\state\vDash_H \ck G p_{j,l}\enspace.
\end{align*}
This implies the claim for $\ck G\psi$.
\end{proof}

We now analyze to what extent \cref{result:ck-disjunction-distributes} holds in the current setting.
We first prove that the $\ck G$ operator distributes over
disjunctions of formulas from the non-epistemic sublanguage
$\mathcal{L}_{\wedge,\vee}$ of $\mathcal{L}$ in which only
conjunction and disjunction is allowed.

\begin{theorem}
  \label{thm:disj-p}
  For any $\varphi_1,\varphi_2\in\mathcal{L}_{\wedge,\vee}$, $i\in N$ and $H$-compliant state $\state$,
  \[
  \mbox{$\state \vDash_H \ck G(\varphi_1\vee\varphi_2)$ iff $\state \vDash_H \ck G\varphi_1\vee \ck G\varphi_2$.}
  \]
\end{theorem}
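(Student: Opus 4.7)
The $(\Leftarrow)$ direction is immediate from the semantics of $\vDash_H$, since $\ck G$ is monotone under propositional entailment (each of $\varphi_1,\varphi_2$ entails $\varphi_1\vee\varphi_2$). For $(\Rightarrow)$ the plan is to reduce to the disjunction-of-facts case covered by \cref{lem:properties} via conjunctive normal form. Since $\mathcal{L}_{\wedge,\vee}$ is evaluated clause-by-clause using ordinary propositional semantics in each state, any CNF rewriting of $\varphi_1,\varphi_2$ preserves truth at every state under $\vDash_H$.

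Write $\varphi_1=\bigwedge_{i=1}^{n}\alpha_i$ and $\varphi_2=\bigwedge_{k=1}^{m}\beta_k$ in CNF, where each $\alpha_i,\beta_k$ is a disjunction of facts. Then $\varphi_1\vee\varphi_2$ is propositionally equivalent to $\bigwedge_{i,k}(\alpha_i\vee\beta_k)$, and each conjunct $\alpha_i\vee\beta_k$ is again a disjunction of facts. Using that $\ck G$ distributes over conjunction (directly from the semantics), we get
\[
\state\vDash_H \ck G(\varphi_1\vee\varphi_2) \iff \state\vDash_H \textstyle\bigwedge_{i,k}\ck G(\alpha_i\vee\beta_k).
\]
By \cref{lem:properties} applied to each $\alpha_i\vee\beta_k$ (and separately to each $\alpha_i$ and each $\beta_k$), this is equivalent to $\state\vDash_H \bigwedge_{i,k}(\ck G\alpha_i\vee \ck G\beta_k)$.

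Abbreviating $A_i:=\ck G\alpha_i$ and $B_k:=\ck G\beta_k$, it remains to observe the propositional tautology $\bigwedge_{i,k}(A_i\vee B_k)\iff(\bigwedge_i A_i)\vee(\bigwedge_k B_k)$: if the right side fails, then some $A_{i_0}$ and some $B_{k_0}$ both fail, contradicting the conjunct $A_{i_0}\vee B_{k_0}$ on the left. Recombining via conjunction-distribution of $\ck G$, the right side is exactly $\ck G\varphi_1\vee \ck G\varphi_2$, giving the claim.

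I do not anticipate a serious obstacle; the only real choice is which normal form to use. A DNF expansion would push the inner $\ck G$ onto conjunctions of facts, where no distributive law is available in this setting. CNF is the right choice precisely because the inner clauses become disjunctions of facts, which is the exact shape handled by \cref{lem:properties}.
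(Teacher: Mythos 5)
Your proof is correct and follows essentially the same route as the paper's: pass to conjunctive normal forms, distribute $\ck G$ over conjunction via the semantics, and apply \cref{lem:properties} to the resulting clauses (disjunctions of facts), finishing with the propositional regrouping $\bigwedge_{i,k}(A_i\vee B_k)\leftrightarrow(\bigwedge_i A_i)\vee(\bigwedge_k B_k)$. The paper's proof is just a terse statement of this same argument, so your write-up simply supplies the details it leaves implicit.
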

\begin{proof}
% %See \cref{sec:omitted-proofs}.\end{proof}
% %\begin{movedProof}[Proof of \cref{thm:disj-p}]
Passing by the conjunctive normal forms of $\varphi_1$ and $\varphi_2$,
the result follows from the definition of the semantics and Lemma \ref{lem:properties} twice.
%   Let $\bigwedge_{j=1}^{k} \phi_j$ and $\bigwedge_{l=1}^{m} \psi_l$ be
%   the conjunctive normal forms of $\varphi_1$ and $\varphi_2$, respectively. So each
%   $\phi_j$ and $\psi_l$ is a disjunction of facts.  Then
% \begin{eqnarray*}
%  &   & \mbox{$\state \vDash_H \ck G(\varphi_1\vee\varphi_2)$}\\
%  & \textrm{iff} & \quad \C{\mbox{by the semantics}}\\
%  &   & \mbox{$\state \vDash_H \ck G(\bigwedge_{j=1}^{k} \bigwedge_{l=1}^{m} (\phi_j\vee  \psi_l))$}\\
%  & \textrm{iff} & \quad \C{\mbox{by the semantics}}\\
%  &   & \mbox{$\state \vDash_H \bigwedge_{j=1}^{k} \bigwedge_{l=1}^{m} \ck G(\phi_j\vee  \psi_l)$}\\
%  & \textrm{iff} & \quad \C{\mbox{by \cref{lem:properties} applied twice}}\\
%  &   & \mbox{$\state \vDash_H \bigwedge_{j=1}^{k} \bigwedge_{l=1}^{m} (\ck G\phi_j \vee   \ck G\psi_l)$}\\
%  & \textrm{iff} & \quad \C{\mbox{a tautology}}\\
%  &   & \mbox{$\state \vDash_H (\bigwedge_{j=1}^{k} \ck G\phi_j) \vee (\bigwedge_{l=1}^{m} \ck G\psi_l)$}\\
%  & \textrm{iff} & \quad \C{\mbox{by the semantics}}\\
%  &   & \mbox{$\state \vDash_H \ck G(\bigwedge_{j=1}^{k} \phi_j) \vee \ck G(\bigwedge_{l=1}^{m} \psi_l)$}\\
%  & \textrm{iff} & \quad \C{\mbox{by the semantics}}\\
%  &   & \mbox{$\state \vDash_H \ck G \varphi_1 \vee \ck G \varphi_2$.}
% \end{eqnarray*}
\end{proof}

% This distribution property of the $\knows i$ operator can be generalized to arbitrary formulas that do not
% involve negation or the knowledge operators.

% \begin{corollary} \label{cor:K}
%   For any facts $p_{j,l}$ %, with $j \in \C{1, \LL, k}$, $l \in \C{1, \LL, m_j}$
%  and $i\in N$,
% \[
%     \vDash \knows i(\bigwedge_{j=1}^{k} \bigvee_{l=1}^{m_j} p_{j,l})\leftrightarrow
%     \bigwedge_{j=1}^{k} \bigvee_{l=1}^{m_j} \knows i p_{j,l}.
% \]  
% \end{corollary}
% \begin{proof}
%   By \cref{thm:disj-p} and the fact that for all formulas $\phi$ and $\psi$
% \[
%     \vDash \knows i(\phi \A \psi)\leftrightarrow
%     (\knows i \phi \A \knows i \psi).
% \]
% \end{proof}

However, the $\ck G$ operator does not distribute over the knowledge operators,
so the counterpart of \cref{result:ck-disjunction-distributes} does not hold.

\begin{example}
  \label{ex:ck-doesnt-distribute-over-k}
  Consider the set of players $N = \C{i,j,k,l,n}$
  and the hypergraph $H$ being the graph with the edges
  $(n,k), (n,l), (k,j), (l,j), (j,i)$,
  see \cref{fig:interaction-structure-examples}\subref{fig:interaction-structure-examples:b}.
  Take $V = \C{p}$, where $p \in \bits_n$, and
  \[
  M =
  \C{(n,\C{n,k},p), (k,\C{k,j},p), (j,\C{j,i},p)}\enspace.
  \]
  Then
  \[
  \state \vDash_H \knows i(\knows k p \vee \knows l p),
  \]
  but neither
  $
  \state \vDash_H \knows i \knows k p,
  $
  nor
  $
  \state \vDash_H \knows i \knows k q
  $
  holds. Informally, player~$i$ knows that either player~$k$ or player~$l$ knows~$p$
  but he does not know which one of them knows~$p$.
\end{example}

As noticed already after the proof of \cref{result:ck-disjunction-distributes},
the $\knows i$ operator does not distribute over negation either;
the same example applies here.

Finally, reconsider \cref{thm:permutation}. It is straightforward to see that it does not hold in 
the present setting, even for two players. Indeed, reconsider \cref{ex:ck-of-h-does-matter}.
We showed there that $\state \vDash_H \knows i \knows k p$. However, it is easy to see that 
$\state \nvDash_H \ck{\{i, k\}} p$ since $\state \nvDash_H \knows k \knows i p$. 

\section{Conclusions and related work}
\label{sec:conclusions}

In this paper we studied various aspects of common knowledge in two
simple frameworks concerned with synchronous communication. It is
useful to clarify that our two impossibility results concerning
the attainment of common knowledge amongst players
(\cref{result:knowledge-chain-phi-only-through-msg,thm:group1})
differ from the customary impossibility results.

For example, \citet{HM90} formalize the epistemic aspects of the
celebrated Coordinated Attack Problem that consists in
achieving common knowledge (a `common plan of action').
They show (in Section~8) that in a
distributed system in which communication is not guaranteed, common
knowledge is not attainable. When communication is guaranteed, they
show the same result when there is no bound on message delivery
times. In both situations the proof assumes the existence of clocks and
point-to-point communication.

The close correspondence between simultaneous events (in our system a
broadcast to the whole group) and common knowledge is pointed out by
\citet{FHMV99}. Their model of a distributed system consists of a
set of linear `runs' (histories), while we only assume a partial
ordering ($\leadsto$) between messages broadcast to groups, which are
the only possible actions.
We have shown that in our framework, common knowledge of a positive formula
is indeed inseparably related to group communication, which corresponds to simultaneous events.
However, as we have seen, this does not hold of negative formulas,
so the relationship is not as obvious as it may seem.
The results of \citet{FHMV99} may be seen to correspond to our \cref{cor:iff},
though we allow broadcasts instead of just point-to-point communication.

\citet{chandy_processes_1986} consider the flow of information
in distributed systems with asynchronous communication.
They study how processes `learn' about states of other processes and how knowledge evolves.
%They also prove that in such systems common knowledge does not change.
% \citet{chandy_processes_1986} consider the flow of information
% in distributed systems with asynchronous communication
% and establish lower bounds on the number of messages required to solve certain problems. 
% In some sense this resembles our work, even though we are looking for simplifications rather than lower bounds.
% The main difference, though, is that hypergraphs, which are important in our context,
% are equivalent to mere point-to-point graphs in the context of asynchronous communication:
% Without guarantees on the delivery time, and without temporal reasoning,
% the information content of receiving an asynchronous group message
% turns out to be the same as receiving just a separate private message.
% Sending a group message can be simulated by sending a separate message to each group member.
The main difference is that with asynchronous communication,
hypergraphs are equivalent to mere point-to-point graphs.
Without guarantees on the delivery time, and without temporal reasoning,
from the knowledge point of view
sending an asynchronous group message has the same effect as
sending a separate message to each group member.

Our study concerning the consequences of the assumption whether the underlying
hypergraph is commonly known among the players brings our paper
somewhat closer to the area of social networks
(see, e.g., \citet{Jac08}).
Within logic, the relevance of epistemic issues in communication networks
has been recognized by a number of authors, e.g.~\citet{van_benthem_one_2006_}.
However, to our knowledge the only work that addresses these issues
is \citet{pacuit_reasoning_2007} and, to some extent,~\citet{roelofsen_exploring_2005}.
We now briefly discuss these frameworks and relate them to our own.

\Citet{pacuit_reasoning_2007} use a history-based model
to study diffusion of information in a communication graph,
starting from facts initially known to individual players.
Communicative acts are assumed to consist in
a player~$j$ `reading' an arbitrary propositional formula from another player~$i$,
with the precondition that~$i$ \emph{knows} that the formula holds.
Communicative acts are restricted to a commonly known, static, directed graph,
and, unlike in our case, are assumed to go \emph{unnoticed by~$i$}.
% Using our notation for messages,
% a communicative act in that framework can thus be denoted as
% $\msg{i}{j}{\varphi}$,
% and is only allowed to occur if there is an edge from~$i$ to~$j$ in the communication graph,
% and only in states where $\knows i\varphi$ holds
% (that is, communication is truthful).
The paper formalizes what conclusions,
beyond the mere factual content of messages,
can be drawn using knowledge of the communication graph and, consequently,
knowledge of the possible routes along which certain information can have flown.

\Citet{roelofsen_exploring_2005} uses a model based on Dynamic
Epistemic Logic (DEL) to describe how some initial epistemic
model evolves in a communication situation.  Communication is
among subgroups and can contain arbitrary epistemic formulas.
% That is, a message can be denoted as $\msg{A}{B}{\varphi}$, where $A$ and
% $B$ are groups of players and $\varphi$ is an epistemic formula, with
% the intuitive reading that the players in~$B$ commonly learn that the
% players in~$A$ commonly know~$\varphi$.  Here, too, truthful
% communication is assumed, so $\ck A\varphi$ is a precondition for such
% a message.  Communication is also restricted to occur along a hypergraph, however...
Further, communication is assumed to be truthful and is restricted to
occur along a hypergraph.
However the hypergraph is explicitly encoded in the model, and thus
(knowledge of it) is subject to change.
% In the spirit of DEL, this
% approach lends itself best to an explicit modeling of all events,
% including the formation of suspicions about undetected communications.

While under certain circumstances history-based modeling and DEL are
equivalent~\cite{van_benthem_merging_2007}, our approach is more
in the spirit of~\citet{pacuit_reasoning_2007}.
Indeed, we also study how specific information may have spread.
Also, all possible communications are included in the model
and suspicions about them are not explicitly formed.
Finally, the underlying graph (in our case hypergraph) is static
and not included in the model.

On a technical level, our approach differs from~\citet{pacuit_reasoning_2007}
in that we use sets of messages instead of sequences
and, when dealing with forwarding, employ a more general structure than histories
by considering messages partially ordered by the relation $\leadsto$.
On the other hand, our messages are simpler:
\citet{pacuit_reasoning_2007} allow disjunctions of facts, while we allow only facts.

What distinguishes our approach on a more conceptual level
is that our focus lies on identifying natural conditions that allow us to
prove stronger results about knowledge, such as distributivity over
disjunctions, or irrelevance of (common) knowledge of the underlying
hypergraph.

% \todo{revise}
% In \cref{sec:forwarding} we relaxed the assumption that in a message $(i,A,p)$
% it has to be the case that $p \in \bits_i$,
% but we did still insist on the \emph{truthfulness} of messages,
% requiring that $p \in V$.
% Indeed if $(i,A,p) \in M$ then, by our semantics, $(V,M) \vDash \knows i p$.
% If we were to remove this restriction, insisting only that $p \in At$,
% then we would no longer be reasoning about \emph{knowledge} but about \emph{belief},
% since it would be a question of players exchanging possibly false (but credible) information.
% We will examine this case in future work.

\section{Extensions}
\label{sec:extensions}

We conclude by listing a number of natural extensions of the considered framework 
that are worthy of further study:
\begin{itemize}

\item We could equip the players with theories that their parts of
  valuations, $V_i$, have to satisfy.  In this extension we would
  assume that each player $i$ has a propositional theory $T_i$
  built from facts in $At_i$ that he adheres to. The theories $T_i$
  where $i \in N$ then form a common knowledge among the players. So
  each player $j$ can assume that player $i$ considers $V_i$ such that
  $V_i$ is a model of $T_i$.

\item We could consider more complex messages than simple atomic
facts, for example propositional formulas, or even
epistemic formulas.
Also, we could study asynchronous communication,
messages from unknown senders or to an unknown group of recipients,
and a counterpart of the blind copy feature familiar from e-mails.
  
\item In \cref{sec:forwarding} we relaxed the assumption that in a
  message $(i,A,p)$ it has to be the case that $p \in \bits_i$, but we
  did still insist on the \emph{truthfulness} of messages, requiring
  that $p \in V$.  We could further relax this assumption, by
  insisting only that $p \in At$. This way we would model messages
  that consist of possibly false (but credible) information.  This
  would lead to a study of beliefs (which can be false) rather than
  knowledge (which cannot) and common beliefs rather than common
  knowledge.
  
\item We could consider in this framework belief revision, by assuming
  that the theory $T_i$ of player $i$ consists of his beliefs, which
  would then be revised in view of received information.
  Alternatively, $T_i$ could be the certain knowledge of player $i$
  against which received information would be revised.
  
\item We could assume that the players have different knowledge of the
  underlying hypergraph, by assuming that for all $i$ we have $H \sse
  H_i$, where $H$ is the underlying hypergraph and $H_i$ is its
  approximation known to $i$, and that players learn $H$ by exchanging
  messages.  The messages would contain information about which
  hyperarcs do \emph{not} belong to $H$.
\item Alternatively, we could study a setup in which each player has
an indistinguishability relation over hypergraphs. This
would allow us to model players' partial knowledge of the
underlying hypergraph.
\end{itemize}

We use the setting of the first item in~\cite{apt_strategy_2009} to reason
about iterated elimination of strategies in \oldbfe{strategic games with interaction structures}.
These are strategic games in which there is a hypergraph over the set
of players (an interaction structure) and
the players can communicate about their preferences, initially only known to themselves,
so that within each hyperarc players can obtain common knowledge of each other's
preferences.

\section*{Acknowledgements}
\label{sec:acknowledgements}

We thank Rohit Parikh, Willemien Kets, Aaron Ar\-cher, Henry Landau,
and three anonymous referees for discussion and helpful suggestions.
The second and third authors were supported by a GLoRiClass fellowship
funded by the European Commission (Early Stage Research Training
Mono-Host Fellowship MEST-CT-2005-020841).

%\bibliography{e,all,additional}
%\bibliographystyle{abbrvnat}

%\Closesolutionfile{movedProofs}

%\appendix
% \section{Omitted Proofs}
% \label{sec:omitted-proofs}
%\Readsolutionfile{movedProofs}

% \section*{Appendix}
% We provide here the full versions of Sections 3 and 4.
% \III

% \NI
% \section*{3 \ Telling}
% \label{sec:telling}

% \setcounter{section}{3}
% \setcounter{theorem}{0}

% \section*{4 \ Forwarding}
% \label{sec:forwarding}

% \setcounter{section}{4}
% \setcounter{theorem}{0}

\end{document}